\newcommand{\bs}{\boldsymbol}
\newtheorem{proposition}{Proposition}
\newtheorem{corollary}{Corollary}
\begin{document}

\title{Interference-Constrained Pricing for D2D Networks}
\author{Yuan Liu, \IEEEmembership{Member, IEEE}, Rui Wang, \IEEEmembership{Member, IEEE}, and Zhu Han, \IEEEmembership{Fellow, IEEE}
\thanks{This paper was presented in part at the IEEE International Conference on Communications (ICC), Sydney, Australia, June 2014 \cite{YuanICC14}.}
\thanks{Y. Liu is with the School of Electronic and Information Engineering, South China University of Technology, Guangzhou 510641, China (email: eeyliu@scut.edu.cn).}
\thanks{R. Wang is with the Department of Information and Communications at Tongji University, Shanghai 201804, China (email: ruiwang@tongji.edu.cn).}
\thanks{Z. Han is with the Department of Electrical and Computer Engineering, University of Houston, TX 77004, USA (e-mail: zhan2@uh.edu).}

}

\maketitle

\vspace{-1.5cm}

\begin{abstract}
The concept of device-to-device (D2D) communications underlaying cellular networks opens up  potential benefits for improving system performance but also brings new challenges such as interference management. In this paper, we propose a pricing framework for interference management from the D2D users to the cellular system,  where the base station (BS) protects itself (or its serving cellular users) by pricing the cross-tier interference caused from the D2D users. A Stackelberg game is formulated to model the interactions between the BS and D2D users. Specifically, the BS sets prices to a maximize its revenue (or any desired utility) subject to an interference temperature constraint. For given prices, the D2D users competitively adapt their power allocation strategies for individual utility maximization. We first analyze the competition among the D2D users by noncooperative game theory and an iterative based distributed power allocation algorithm is proposed. Then, depending on how much network information the BS knows, we develop two optimal algorithms, one for uniform pricing with limited network information and the other for differentiated pricing with global network information. The uniform pricing algorithm can be implemented by a fully distributed manner and requires minimum information exchange between the BS and D2D users, and the differentiated pricing algorithm is partially distributed and requires no iteration between the BS and D2D users. Then a suboptimal differentiated pricing scheme is proposed to reduce complexity and it can be implemented in a fully distributed fashion.  Extensive simulations are conducted to verify the proposed framework and algorithms.
\end{abstract}

\begin{keywords}
  Device-to-Device (D2D), interference management, distributed power allocation, pricing, and game theory.
\end{keywords}

\section{Introduction}

Incorporating device-to-device (D2D) communications as an underlay to cellular networks attracts considerable interests due to its potential benefits, like spectral/energy efficiency improvements, coverage extension, and traffic offloading. Such a heterogenous network consisting of infrastructure-based and ad hoc networks can achieve better performance than in a pure cellular or ad hoc network \cite{Corson2010,Doppler2009,Lei2012,Fodor2012}. Very recently, D2D in cellular networks has been studied and standardized by the Third Generation Partnership Project (3GPP) Long Term Evolution Advanced (LTE-A) \cite{3gpp1,3gpp2}.

However, the D2D enabled cellular networks pose new challenges which are quite different from those of either cellular networks or ad hoc networks, and thus significantly complicate the network design. One of the most crucial issues is interference management. With spectrum reuse, D2D communication
can improve spectral efficiency and thus enhance system throughput. However, intracell interference could be severe in addition to intercell interference, because D2D communication also causes interference to
the cellular network users. Therefore, methods for efficient interference management and coordination
must be developed to both cellular and D2D users for taking full advantage of D2D communications.
 On the other hand, it is important to guarantee that D2D communications do not generate harmful interference to cellular communications since cellular networks operate on licensed bands. This is similar to cognitive radio systems, and one of the major differences is that D2D communications can be controlled by cellular base station (BS) \cite{Doppler2009,Lei2012,Fodor2012,XiaICCC16}, whereas secondary users are not controlled by primary users in cognitive radio networks.

There are several works for interference management in D2D communications \cite{Zulhasnine2010,Yu2011,Min2011,Xu2013,YuanCL2016}. For instance, authors in \cite{Zulhasnine2010} formulated the channel assignment problem as a mixed integer nonlinear programming and proposed a greedy heuristic algorithm to maximize total throughput while maintaining signal-to-interference-plus-noise ratio (SINR) requirements of all users. In \cite{Yu2011}, power allocation was studied for throughput maximization with minimum and maximum
SINR constraints. Both \cite{Zulhasnine2010} and \cite{Yu2011} assumed that one channel can be occupied by at most one D2D-cellular user pair. Interference management schemes were presented in \cite{Min2011} based on a predefined interference limited area. The authors in \cite{Xu2013} adopted a combinatorial auction approach to assign the cellular users' channels to the D2D users for total throughput maximization. Optimal centralized and distributed mode selection between cellular communication and D2D communication for each user were investigated in \cite{YuanCL2016}. Joint channel and power allocation using in combinatorial auction was also studied in \cite{6981957,6845058} .

In a realistic spectrum-sharing network, the interests of the D2D-tier and the cellular-tier may be inconsistent due to the cross-tier (i.e., D2D-to-cellular and cellular-to-D2D) interference. To this end, in this paper, we target at jointly optimizing the possibly conflicting objectives of the two tiers, which is essentially different from the above mentioned works \cite{Zulhasnine2010,Yu2011,Min2011,Xu2013} which are system-wide optimization by resource allocations.

Specifically, to ensure that the aggregate received interference at the BS is kept below an acceptable level,
we impose an \emph{interference temperature constraint} at the BS, and the BS prices the received interference caused from the D2D users.
%
%
We note that interference temperature constraint has been commonly adopted at BS in heterogeneous networks (e.g., \cite{Kang2012,Ma2010}) and primary user in cognitive radio networks (e.g., \cite{Kim2011,Pang2010a,Wu2009,Hong2011}). Nevertheless, in the schemes \cite{Ma2010,Kim2011,Pang2010a,Wu2009,Hong2011}, the incentives of BS and primary user were not considered. In other words, the BS and primary user's prices are used to maintain the interference temperature constraint but not to maximize the utilities of themselves, i.e. the BS and primary user have no utility.
Though \cite{Kang2012} considered the BS's incentive, the authors mainly focused on sparse-deployment users, i.e., the mutual interferences among users can be neglected.

Unlike these  previous works, in this paper, by imposing the interference temperature constraint at the BS side, the interference tolerance margin at the BS is treated as a divisible resource to be sold among the D2D users.
This is because enabling D2D communications underlaying cellular networks brings severe interference to the original cellular system due to spectrum-sharing. Thus D2D users pay extra prices (costs) for causing undesirable interferences to the cellular system. This is the idea of interference pricing.
Moreover, interference pricing is used not only as a game-theoretic approach to balance the objectives of the D2D-tier and cellular-tier, but also as a mediator among the mutually interfered D2D users.
These aspects make our pricing framework distinctly differ from the related works on interference temperature constraint in heterogeneous networks and cognitive radio networks (e.g., \cite{Ma2010,Kang2012,Kim2011,Pang2010a,Wu2009,Hong2011,Lin2013}).

It is worth noting that it hardly solves the resource allocation problems in interference channel with globally optimal solutions in general even in a centralized environment. On the other hand, game theory offers a set of mathematical tools to study complex interactions among rational players and adapt their choices of strategies. Therefore, game theory is a suitable tool to model and analyze the resource allocation problems for D2D networks. 
In game theory, the \emph{fictitious} prices are usually used as interaction information to coordinate and control the transmissions of network nodes. In other words, the prices have the economic interpretations but are actually system parameters designed in resource allocation schemes.

In this paper, we model the interactions between the cellular-tier and D2D-tier as a Stackelberg game. In D2D networks, the BS provides services and the D2D users are controlled by the BS for interference management. Thus the relationship of the BS and D2D users is a bit like the hierarchical structure (i.e. leader and followers) of Stackelberg game. Moreover, the nodes may be myopic and maximize their own profits through competition. This is our motivation of using Stackelberg game. As the leader, the BS sells interference to maximize its revenue (or any desired utility) under a maximum interference tolerance margin. As the followers, the D2D users then purchase interference from the BS to maximize their payoffs. More specifically, the D2D users are modeled as selfish players and form a noncooperative power control subgame, where each D2D user optimally chooses its own transmit power based on local channel state information (CSI) in response to the power allocation strategies of the other D2D users. We propose an iteratively  distributed power allocation algorithm to achieve the unique Nash equilibrium (NE) point. 
Such a user-level subgame is a classical noncooperative Nash power game. The merit is that, given any price, the power outputs converge to a stable solution with a distributed fashion. This is important in D2D networks because D2D users communicate with each other by ad-hoc manner and they are usually self-organized. We also note that NE often leads to network performance degradation compared with a globally optimal solution.
Using the idea of pricing in Stackelberg game, we propose two optimal algorithms for uniform and differentiated pricing. We further propose a suboptimal differentiated pricing scheme with closed-form to reduce complexity.
We show that the proposed uniform pricing algorithm can be implemented by a fully distributed manner and requires minimum information exchange between the BS and D2D users,  the proposed optimal differentiated pricing algorithm is partially distributed and requires no iteration between the BS and D2D users, and the proposed suboptimal differentiated pricing algorithm is fully distributed and without iteration.

Note that the basic idea of pricing used in our paper is common with \cite{Kang2012}, but the proposed optimal algorithms for uniform and differentiated pricing are new, which is the main contribution of our paper. 
Specifically, the work \cite{Kang2012} mainly focused on sparse deployment (i.e. without interference among users) and solved the problem by using Lagrangian duality method for both uniform and differentiated pricing. For dense deployment (i.e. with interference among users), the authors in \cite{Kang2012} solved the problem by exhaustive search over all feasible spaces for both uniform and differentiated pricing.
In our paper, we focus on dense deployment because it is more general in practice. For the uniform pricing, we shrink the search space of price into a specific range by analyzing the properties of the problem, which significantly reduces the complexity. We further reduce the search complexity by exploring the structure of the power allocation.  
For the differentiated pricing, we first express the revenue function of the BS in terms of transmit powers at the Nash equilibrium point of the user-level subgame. Then the revenue maximization problem is transformed into a linear programming problem. By solving the linear programming problem, we can get the optimal prices and, in turn, enforce the users to transmit the desired powers at the Nash equilibrium point of the user-level subgame.
We further propose a suboptimal algorithm with closed-form for differentiated pricing to reduce complexity, which can be implemented in a fully distributed manner. 
Therefore, we conclude that the proposed three algorithms are fundamentally different from the exhaustive search method in \cite{Kang2012}.

The remainder of this paper is organized as follows. Section II describes the system model and game-theoretic problem formulation. In Section III, a distributed power allocation algorithm for the D2D users as well as pricing algorithms are developed. Comprehensive simulations are provided in Section IV. Finally, we conclude the paper in Section V.

\section{System Model and Problem Formulation}

\subsection{System Model}

\begin{figure}[t]
\begin{centering}
\includegraphics[scale=1.]{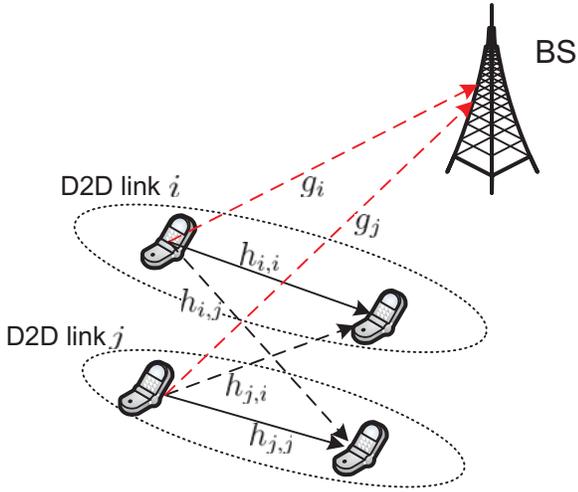}
\vspace{-0.1cm}
 \caption{System model of D2D communication underlaying a cellular network, where the solid and dotted lines denote desired and interference signals, respectively.}\label{fig:system}
\end{centering}
\vspace{-0.3cm}
\end{figure}

We consider a single two-tier cellular network as shown in Fig. \ref{fig:system}, where  total $N$ D2D users share the same \emph{uplink} spectrum of the cellular system and are allowed to transmit simultaneously, i.e. frequency reuse factor of 1. When a D2D source transmits information to its dedicated destination, it not only harms other D2D destinations (co-tier interference) but also interferes with the BS (cross-tier interference). This paper focuses on controlling the cross-tier interference, i.e., the interference from D2D sources to the BS, since it needs to protect the original cellular system if integrating D2D communications.
Denote the D2D users as a set $\mathcal N:=\{1,\cdots,N\}$ and each D2D user refers to a source-destination pair.
In what follows, we use ``user(s)" instead of ``D2D user(s)" for convenience.

We assume that the transmission is based on slot basis, the fading remains unchanged during each transmission slot but possibly varies from one slot to another.
Each slot is assumed to be divided into two phases: the signaling phase and the transmission phase. The signaling phase is used for information interaction (or iteration until convergence), and the transmission phase is used for data transmission with constant power strategies fixed in the signaling phase. Here it is assumed that a slot can be designed long enough in the system so that the overhead of the signaling phase is negligible.
As shown in Fig. \ref{fig:system}, the channel gain from source $i$ to destination $j$ is denoted by $h_{i,j}$, and the source $i$ between the BS is denoted by $g_i$.
All channel gains are modeled as large-scale path loss along with small-scale Rayleigh fading.
The additive noise at the destination of user $i$ is assumed to be independent circularly symmetric complex
Gaussian random variables with zero mean and variance $\sigma^2$. The transmit power of user $i$ is denoted by $p_i$, and denote $\bs p:=\{p_1,\cdots,p_N\}$.

\subsection{Stackelberg Game Formulation}

\begin{figure}[t]
\begin{centering}
\includegraphics[scale=1.2]{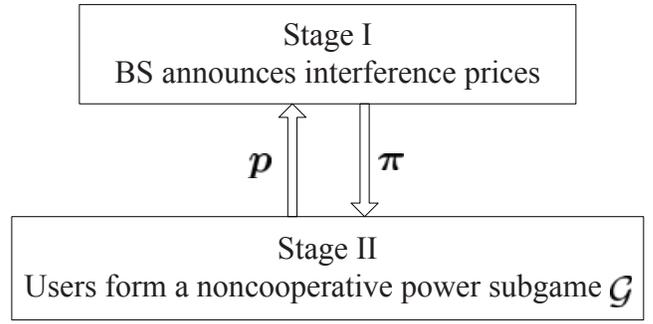}
\vspace{-0.1cm}
 \caption{The proposed Stackelberg game.}\label{fig:game}
\end{centering}
\vspace{-0.3cm}
\end{figure}

We consider the Stackelberg game between the BS and the users as shown in Fig. \ref{fig:game}. The BS is the Stackelberg leader, it first measures the interference temperature and then sets the interference prices in Stage I. The users are followers and choose their actions of power allocation to maximize their individual payoffs in Stage II, according to the prices announced from the BS in Stage I.
We assume that the selfish players are myopic, which means that the players maximize their immediate expected payoffs and do not intentionally affect the strategies of others.

There are a variety of pricing schemes for wireless resource allocations. For instance, individual users are charged in proportion to transmit power \cite{Saraydar2001,Xiao2003,Wang2009,YuanAuction2013}, spectrum trading/access \cite{Niyato2009,Niyato2008}, received SINR \cite{Huang2008,Rasti2009,Ren2011}, and throughput \cite{Feng2004,Kelly1997,Kelly1998}. For the problem considered in this paper, the BS must guarantee that the aggregate received interference from all users should be below the predefined threshold. Hence, it is reasonable that the BS charges each user $i$ by a price $\pi_i$ corresponding to its caused interference $p_ig_i$ to the BS.

Denote $\bs\pi:=\{\pi_1,\cdots,\pi_N\}$. Here we treat the BS as the Stackelberg leader. The goal of the BS is to set the optimal interference prices $\bs\pi$ to maximize its revenue charged from the users within its tolerable aggregate interference margin. Mathematically, the optimization problem at the BS's side can be expressed as
\begin{subequations}\label{eqn:bs}
\begin{align}
\textbf{P1}:~~\max_{\bs\pi\succeq0}~ &u_B(\bs\pi)=\sum_{i=1}^N p_ig_i\pi_i\\
{\rm s.t.}~~ &\sum_{i=1}^N p_ig_i\leq I_{th}, \label{eqn:ith}
\end{align}
\end{subequations}
where $I_{th}$ is the maximum interference that the BS can tolerate, and \eqref{eqn:ith} is the interference power constraint or interference temperature constraint, which means that the total received power from D2D users at the BS should be below a threshold.


%

At the users' side, the received SINR of user $i$ can be written as
\begin{equation}
  \gamma_i(p_i,p_{-i})=\frac{p_ih_{i,i}}{\sum_{j\neq i}p_jh_{j,i}+\sigma^2},
\end{equation}
where $p_{-i}:=\{p_1,\cdots,p_{i-1},p_{i+1},\cdots,p_N\}$ is the vector of power allocation of all the users except for user $i$.

The achievable rate of user $i$ is given by
\begin{equation}
  R_i(\gamma_i(p_i,p_{-i}))=\log(1+\gamma_i(p_i,p_{-i})).
\end{equation}

We define the payoff of user $i$ as
\begin{equation}\label{eqn:ui}
  u_i(p_i,p_{-i},\pi_i)=w_iR_i(\gamma_i(p_i,p_{-i}))-p_ig_i\pi_i,
\end{equation}
where $w_i$ is the weight of user $i$. The payoff function of each user is the difference between its transmission rate and the payment that it needs to make to the BS for the caused interference. Note that the proposed algorithms in this paper are not affected if the transmission rates are replaced by general utility functions that are differentiable, strictly increasing, and concave. We consider the transmission rates as the utility functions only for ease of exposition.

At the users' side, each user $i$ aims to maximize its own payoff by power adaption for given price $\pi_i$ set by the BS, this problem can be formulated as
\begin{subequations}\label{eqn:ui-game}
\begin{align}
\textbf{P2}:~~\max_{p_i}~ &u_i(p_i,p_{-i},\pi_i) \\
{\rm s.t.}~~ &0\leq p_i\leq \overline{p}_i.
\end{align}
\end{subequations}

Note that the BS's prices and the users' power allocation strategies are coupled in a very sophisticated way. Specifically, the BS's pricing decisions influence the users' power allocation strategies which, in turn, impact the BS's revenue. In the following, we propose a Stackelberg game approach to study their interactions.

\section{User and Base Station Optimization}

The Stackelberg game falls into the class of dynamic game and the objective of the game is to find the Stackelberg Equilibrium (SE) point(s) which can be obtained by finding its subgame perfect Nash equilibrium (NE). The typical solution for determining SE is backward induction \cite{Colell}. Therefore, we will start the Stackelberg game by analyzing the users' behaviors in Stage II given the BS's pricing decisions. Then we will investigate the BS's pricing strategies considering the interference temperature constraint in Stage I. In this stage, we propose two optimal pricing schemes, i.e., uniform pricing and differentiated pricing. Notice that the backward induction captures the sequential dependence of the decisions in the two stages. Then a suboptimal differentiated pricing algorithm is further proposed to reduce complexity. At the end of this section, we simply discuss the complexity of the proposed algorithms.

\subsection{Distributed Power Allocation}

Noncooperative game-theoretic approaches are effective to characterize the selfish behaviors of self-interested players. Knowing the prices set by the BS, the competition among the users can be mathematically formulated as a noncooperative power control subgame
\begin{equation}
  \mathcal G:=\{\mathcal N,\{\mathcal P_i\}, \{u_i\}\},
\end{equation}
where $\mathcal N$ is the set of players (or users), $\mathcal P_i$ is the strategy space of each user $i\in\mathcal N$ and defined as the interval $\mathcal P_i:=\{p_i,0\leq p_i\leq \overline{p}_i\}$ that contains the power allocation choices, and $u_i$ is the payoff of each user $i\in\mathcal N$ defined in \eqref{eqn:ui}.

The common concept for solving the noncooperative game problems is the NE at which no user can increase its payoff by unilaterally changing its own transmit power. Mathematically, the power profile $\bs p^*=\{p_1^*,p_2^*,\cdots,p_N^*\}$ is the NE point for the user-level subgame $\mathcal G$ if, for every user, $u_i(p_i^*, p_{-i}^*, \pi_i)\geq u_i(p_i, p_{-i}^*, \pi_i)$, $\forall p_i\in\mathcal P_i$, $\forall i\in\mathcal N$.

Another common concept in game theory is the \emph{best response function} $\mathcal B_i(p_{-i})$ for each player. Formally, define $\mathcal P_{-i}:=\prod_{j\neq i}\mathcal P_j$ as the set-valued function that assigns the best powers to each interference power vector $p_{-i}\in\mathcal P_{-i}$, then $\mathcal B_i(p_{-i}):=\{p_i\in\mathcal P_i|u_i(p_i, p_{-i}, \pi_i)\geq u_i(p_i', p_{-i}, \pi_i),\forall p_i'\in\mathcal P_i\}$.
The best response function $\mathcal B_i(p_{-i})$ reflects the best power user $i$ should transmit in response to the other users' power strategies for the given price set by the BS.

It is easy to verify that the objective function of \textbf{P2} in \eqref{eqn:ui-game} is concave in $p_i$, and the constraint is affine. Thus \textbf{P2} is a convex problem and its optimal solution must satisfy the Karush-Kuhn-Tucker (KKT) conditions \cite{Boyd}. 
By taking the partial derivative of $u_i(p_i,p_{-i},\pi_i)$ with respect to $p_i$ and equating the result to zero,
the best response function $\mathcal B_i(p_{-i})$ can be derived as the following closed-form:
\begin{equation}\label{eqn:best}
  \mathcal B_i(p_{-i})=\left[\frac{w_i}{g_i\pi_i}-\frac{\Delta_i(p_{-i})}{h_{i,i}}\right]_0^{\overline{p}_i},
\end{equation}
where $[x]_a^b:=\max\{\min\{x,b\},a\}$ and $\Delta_i(p_{-i}):=\sum_{j\neq i}p_jh_{j,i}+\sigma^2$ is the interference-plus-noise (IpN) term.

Denote $\mathcal B(\bs p):=\{\mathcal B_1(p_{-1}),\mathcal B_2(p_{-2}),\cdots,\mathcal B_N(p_{-N})\}$, then we present an iterative distributed algorithm for the noncooperative power allocation subgame $\mathcal G$ in Algorithm 1.
\begin{algorithm}[!t]
\caption{Iterative Distributed Power Allocation}
\begin{algorithmic}[1]
\STATE Given price vector $\bs\pi\succeq0$. \STATE Set $t=0$ and initialize $\bs p^{(0)}$ as any feasible vector.
\REPEAT \STATE $t\leftarrow t+1$; \STATE $\bs p^{(t+1)}\leftarrow \mathcal B\left(\bs p^{(t)}\right)$. \UNTIL{$\bs p$ converges.}
\end{algorithmic}
\end{algorithm}

At the beginning of Algorithm 1, each user $i$ arbitrarily chooses its initial power level in its own strategy space $\mathcal P_i$ and then $\bs p^{(0)}$ is feasible.

For a noncooperative game, it is of great important to study the existence of the pure NE point
and the convergence of the iterative process, which are critical for the outcome of the game being stable and eventually arriving. Before leaving this subsection, we give the following proposition to guarantee the existence and uniqueness of the NE point in the user-level subgame and the convergence of the proposed Algorithm 1.

\begin{proposition}\label{prop:ne}
For any given price vector $\bs\pi\succeq0$, the pure NE point of the user-level subgame $\mathcal G$ exists and is unique. Moreover, Algorithm 1 always converges to the unique NE point for any initial feasible power vector $\bs p^{(0)}$.
\end{proposition}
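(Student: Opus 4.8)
The plan is to establish existence and uniqueness of the NE via the standard theory of standard functions (Yates), and then obtain convergence of Algorithm~1 as a corollary. The best response map $\mathcal B(\bs p)$ in \eqref{eqn:best} is exactly an iteration $\bs p^{(t+1)}=\mathcal B(\bs p^{(t)})$, so if I can verify that $\mathcal B$ is a \emph{standard interference function} --- positivity, monotonicity, and scalability --- then Yates' framework immediately gives that $\mathcal B$ has at most one fixed point and that the iteration converges to it from any feasible start, provided a fixed point exists.

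First I would handle the boundaries and reduce to the interior expression. Because of the projection $[\cdot]_0^{\overline p_i}$, each component of $\mathcal B_i$ is a truncation of the affine function $f_i(p_{-i})=\frac{w_i}{g_i\pi_i}-\frac{1}{h_{i,i}}\bigl(\sum_{j\neq i}p_jh_{j,i}+\sigma^2\bigr)$. I would note that truncation preserves the standard-function properties, so it suffices to argue on $\tilde{\mathcal B}_i=\min\{\,[f_i]^+,\overline p_i\}$ (and in fact the min with a constant and the max with $0$ both preserve them). Then: (i) positivity is essentially automatic from the $[\cdot]_0^{\overline p_i}$ clamp together with $\overline p_i>0$; (ii) monotonicity --- if $\bs p\le\bs p'$ then $\mathcal B(\bs p)\le\mathcal B(\bs p')$ --- follows because $f_i$ is \emph{decreasing} in the interference term, wait, that is the wrong direction, so I would instead use the well-known fact that in Nash power games of this type the right object is not $\mathcal B$ itself but one checks the NE directly through the variational/contraction route. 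Let me restate the cleaner path below.

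The cleaner approach, which I would actually carry out: show that $\mathcal B$ is a contraction (or at least that the NE system has a unique solution) via the interference-coupling matrix. Existence of a pure NE is immediate from the Debreu--Glicksberg--Fan theorem, since each $\mathcal P_i=[0,\overline p_i]$ is nonempty, convex and compact, each $u_i$ is continuous in $\bs p$, and --- as already observed in the text just before the proposition --- $u_i$ is concave in $p_i$. For uniqueness and convergence, I would write the unconstrained stationarity condition $p_i=\frac{w_i}{g_i\pi_i}-\frac{\sigma^2}{h_{i,i}}-\sum_{j\neq i}\frac{h_{j,i}}{h_{i,i}}p_j$ and observe that two distinct NEs $\bs p,\bs p'$ would satisfy $|p_i-p_i'|\le\sum_{j\neq i}\frac{h_{j,i}}{h_{i,i}}|p_j-p_j'|$ after accounting for the projection (which is nonexpansive); then summing over $i$ with weights, or invoking that the spectral radius of the matrix with entries $h_{j,i}/h_{i,i}$ ($i\ne j$, zero on the diagonal) --- hmm, that matrix need not be a contraction in general. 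This is exactly the main obstacle: the generic interference-channel Nash power game is \emph{not} always a contraction. I expect the paper's argument to therefore rely on the specific structure here --- namely that $\mathcal B_i$ does \emph{not} depend on $p_i$ at all (the payoff is linear in the price term and logarithmic in rate, giving a waterfilling-type best response), so the iteration is a \emph{linear} map composed with projections, and one can diagonalize or use a weighted sup-norm in which the off-diagonal matrix is a contraction, or appeal to the fact that the feasible box is invariant and the map is affine-nonexpansive, forcing convergence.

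So my final plan, in order: (1) invoke Debreu--Glicksberg--Fan for existence, citing the already-established concavity and compactness; (2) write the best-response map explicitly as in \eqref{eqn:best} and note it is independent of $p_i$; (3) prove the map $\mathcal B$ is a standard interference function in Yates' sense (positivity, monotonicity in the \emph{correct} sense for this affine-decreasing structure --- I would double-check the sign convention, possibly reparametrizing so that monotonicity holds, or alternatively show $-\mathcal B$ composed appropriately is standard), or, if that sign issue is genuine, instead prove directly that $\|\mathcal B(\bs p)-\mathcal B(\bs p')\|_{\infty,\bs w}\le \rho\|\bs p-\bs p'\|_{\infty,\bs w}$ for a suitable weighted sup-norm with $\rho<1$ using the normalized coupling matrix; (4) conclude uniqueness of the fixed point and geometric convergence of $\bs p^{(t)}$ from any feasible $\bs p^{(0)}$, which is precisely the statement of Proposition~\ref{prop:ne}. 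The step I expect to be the real crux is (3): pinning down \emph{why} the coupling among users is weak enough (or structured enough) to guarantee a unique NE and global convergence, rather than merely local convergence or a nonempty set of equilibria --- in the worst case this may require an explicit diagonal-dominance-type assumption on the channel gains that the paper either assumes implicitly or derives from the waterfilling structure of \eqref{eqn:best}.
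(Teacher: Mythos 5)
Your existence argument (nonempty compact convex strategy sets, continuity of $u_i$ in $\bs p$, concavity in $p_i$, Debreu--Glicksberg--Fan) is exactly the paper's route and is fine. The problem is uniqueness and convergence: your proposal never closes either, and you say so yourself. You correctly observe that the best response $\mathcal B_i(p_{-i})=\left[\frac{w_i}{g_i\pi_i}-\frac{\Delta_i(p_{-i})}{h_{i,i}}\right]_0^{\overline p_i}$ is \emph{decreasing} in the other users' powers, so Yates' monotonicity condition ($\bs p\succeq\tilde{\bs p}\Rightarrow\mathcal B(\bs p)\succeq\mathcal B(\tilde{\bs p})$) fails in the stated direction; this is not a cosmetic sign issue, because a componentwise-decreasing map composed with itself is increasing, which by itself only gives convergence of the even and odd iterates, possibly to a two-cycle rather than to a fixed point. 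You also correctly observe that the fallback contraction argument requires the spectral radius of the matrix with off-diagonal entries $h_{j,i}/h_{i,i}$ to be below one, a diagonal-dominance condition on the cross gains that does not hold for arbitrary channel realizations (it is precisely the known sufficient condition for uniqueness and convergence of iterative waterfilling in interference channels). Since you complete neither route, the proposal as written establishes existence only; uniqueness and the convergence claim for Algorithm~1 remain unproved.

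For comparison, the paper's own appendix takes exactly the path you first attempted: it asserts that $\mathcal B$ is a standard function in Yates' sense, lists positivity, monotonicity, and scalability, and then ``omits the routine details.'' It does not confront the monotonicity direction you flagged, nor does it state any condition on the cross gains. So the obstacle you isolated is genuine and is not resolved by the paper either; but measured against the proposition itself, your writeup still has a gap at its central step, and an honest completion would require either restricting to weak coupling (a spectral-radius or dominance assumption such as $\sum_{j\neq i}h_{j,i}<h_{i,i}$), or replacing the Yates machinery with an argument tailored to decreasing best responses.
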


\begin{proof}
  Please see Appendix \ref{app:ne}.
\end{proof}

It is worth noting that the existence of a fixed point (even it is unique) of an iterative process does not necessarily maintain the convergence, and the existence of a fixed point and convergence are two separate concepts of an iterative process. We can prove that the proposed iterative distributed algorithm in Algorithm 1 can converge to the unique NE since the best response function is standard and each user has a peak power constraint \cite{Yates}.



\subsection{Interference Pricing}

Now we turn to look at how the BS makes the pricing decisions for revenue maximization with interference constraint in Stage I. Finding the optimal solution of \textbf{P1} in \eqref{eqn:bs} always resorts to exhaustive search over all ranges of $\bs\pi\succeq\bs0$. Due to the prohibitively computational complexity of the exhaustive search, we alternatively propose two efficient pricing algorithms, one for uniform pricing and the other for differentiated pricing with limited and global network information, respectively. The uniform pricing algorithm assigns an identical price to all users, while the differentiated pricing algorithm charges different received interference power levels by different prices.

\subsubsection{Uniform Pricing with Limited Information}

In this case, the BS sets and broadcasts a uniform price to all users, i.e., $\pi_1=\pi_2\cdots=\pi_N=\pi$. Then the interference-constrained revenue maximization problem in \textbf{P1} reduces to a one-dimensional search problem over price $\pi\geq0$. To demonstrate more insights into the interaction between the BS and users, we first analyze the properties of the revenue function $u_B(\pi)$ in the following proposition.

\begin{proposition}\label{prop:ub}
The optimal uniform price $\pi^*$ must satisfy $\pi^l\leq\pi^*\leq\pi^u$, where $\pi^u$ and $\pi^l$ are the upper and lower bounds of $\pi^*$
\begin{align}
  &\pi^u=\max_{i\in\mathcal N}\frac{w_ih_{i,i}}{g_i\sigma^2},\label{eqn:pi-u}\\
  &\pi^l=\min_{i\in\mathcal N}\frac{w_ih_{i,i}}{g_i\left(\overline{p}_ih_{i,i}+\Delta_i(\overline{p}_{-i})\right)}.\label{eqn:pi-l}
\end{align}
The BS's revenue function $u_B(\pi)$ has the following properties:
\begin{enumerate}
  \item $u_B(\pi)\geq0$;
  \item $u_B(\pi)<\infty$ if the number of users is finite;
  \item $u_B(\pi)=0$ if and only if $\pi=0$ or $\pi\geq\pi^u$;
  \item $u_B(\pi)=\pi\sum_{i=1}^N\overline{p}_i g_i$ if and only if $0\leq\pi\leq\pi^l$.
\end{enumerate}
\end{proposition}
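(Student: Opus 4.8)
The plan is to regard the Nash equilibrium $\bs p^*$ of the user-level subgame---unique by Proposition~\ref{prop:ne}---as a function of the common price $\pi$, and to exploit the closed-form best response \eqref{eqn:best}. The one idea driving all four properties is that, since the NE is unique, a power vector coincides with $\bs p^*$ \emph{if and only if} it is a fixed point of $\mathcal B(\cdot)$; this lets me reduce every statement about the revenue $u_B(\pi)=\pi\sum_i p_i^* g_i$ to elementary scalar inequalities comparing $w_i/(g_i\pi)$ with the interference-plus-noise ratios $\Delta_i(\cdot)/h_{i,i}$.

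Properties 1) and 2) are immediate: $\pi\ge0$, $p_i^*\ge0$ and $g_i>0$ give $u_B(\pi)\ge0$, while $p_i^*\le\overline{p}_i$ gives $u_B(\pi)\le\pi\sum_i\overline{p}_ig_i<\infty$ for finite $N$. For property 3), observe that when $\pi>0$ we have $u_B(\pi)=0$ iff $\sum_i p_i^* g_i=0$ iff $\bs p^*=\bs 0$, and---by uniqueness---this holds iff $\bs 0$ is a fixed point of $\mathcal B$, i.e.\ $\mathcal B_i(\bs 0)=0$ for every $i$. Substituting $\Delta_i=\sigma^2$ in \eqref{eqn:best}, this is equivalent to $w_i/(g_i\pi)\le\sigma^2/h_{i,i}$ for all $i$, i.e.\ to $\pi\ge\pi^u$; adding the trivial case $\pi=0$ yields 3).

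Property 4) is the mirror-image computation at the power cap. I would show that $\bs p^*$ equals the full-power vector $\overline{\bs p}:=(\overline{p}_1,\cdots,\overline{p}_N)$ iff $\overline{\bs p}$ is a fixed point of $\mathcal B$, which by \eqref{eqn:best} amounts to $w_i/(g_i\pi)-\Delta_i(\overline{p}_{-i})/h_{i,i}\ge\overline{p}_i$ for all $i$, i.e.\ to $\pi\le\pi^l$ after rearranging. When $\bs p^*=\overline{\bs p}$ one reads off $u_B(\pi)=\pi\sum_i\overline{p}_ig_i$; conversely that identity forces $p_i^*=\overline{p}_i$ for all $i$ since $p_i^*\le\overline{p}_i$ and $g_i>0$ (with $\pi=0$ subsumed in $\pi\le\pi^l$). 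Along the way, $\overline{p}_ih_{i,i}+\Delta_i(\overline{p}_{-i})\ge\sigma^2$ shows $\pi^l\le\pi^u$, so the interval in the statement is nonempty.

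It remains to deduce $\pi^l\le\pi^*\le\pi^u$ from 3) and 4). For the upper bound: $\pi=\pi^u$ is feasible for \textbf{P1} (its NE is $\bs 0$), and by 3) every $\pi\ge\pi^u$ gives $u_B=0$, so an optimal price with $\pi^*\le\pi^u$ always exists---it is $\pi^u$ if the optimum is $0$, and lies in $(0,\pi^u)$ otherwise. For the lower bound: by 4), on $[0,\pi^l]$ the revenue is the linear function $\pi\sum_i\overline{p}_ig_i$, maximized at $\pi=\pi^l$, while the interference at the NE is the \emph{constant} $\sum_i\overline{p}_ig_i$; hence either $\pi^l$ is feasible (and then it dominates all of $[0,\pi^l]$), or no price in $[0,\pi^l]$ is feasible (and then necessarily $\pi^*>\pi^l$). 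I expect the one delicate point to be exactly this bookkeeping around the interference constraint \eqref{eqn:ith}---keeping the ``without loss of generality'' steps correct regardless of whether $\overline{\bs p}$ is feasible, and (if one wants full rigor about the existence of $\pi^*$) noting monotonicity and continuity of the NE interference $\sum_i p_i^*(\pi)g_i$ in $\pi$---together with the repeated, essential appeal to NE uniqueness that converts ``the NE equals $\bs 0$ (resp.\ $\overline{\bs p}$)'' into a fixed-point identity.
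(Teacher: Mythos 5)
Your proposal is correct and follows essentially the same route as the paper: evaluate the best response \eqref{eqn:best} at the two extreme power vectors $\bs 0$ and $\overline{\bs p}$, convert the fixed-point conditions into the scalar inequalities that define $\pi^u$ and $\pi^l$, and read off properties 3) and 4) in both directions. If anything you are more complete than the paper, which leaves the deduction of $\pi^l\leq\pi^*\leq\pi^u$ (and the feasibility bookkeeping around the interference constraint on $[0,\pi^l]$) implicit, whereas you spell it out.
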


\begin{proof}
Please see Appendix \ref{app:ub}.
\end{proof}

The above proposition yields the following interpretation: the optimal price lies in a certain range, depending on channel conditions, user weights, interference, and peak power constraints; the BS's revenue is always nonnegative because the transmit powers of the users are nonnegative; the maximum revenue is bounded if the number of users is finite; the revenue vanishes if the price is too high or too low.

According to Proposition \ref{prop:ub}, we derive the following corollary.

\begin{corollary}\label{coro:pi}
  When $\pi>\pi^l$, the peak power constraint of each user is not active at the NE point in the user-level subgame $\mathcal G$. In this case, $0\leq \mathcal B_i(p_{-i})<\overline{p}_i$, $\forall i\in\mathcal N$.
\end{corollary}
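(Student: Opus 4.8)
The plan is to read the activity of the peak-power constraint directly off the closed-form best response \eqref{eqn:best}, and then to translate the resulting threshold into the quantity $\pi^l$ of Proposition~\ref{prop:ub}. Specializing \eqref{eqn:best} to the uniform price $\pi_i=\pi$, user $i$'s best response is pinned at $\overline{p}_i$ in response to a given $p_{-i}$ exactly when the unclipped value meets or exceeds $\overline{p}_i$, i.e.\ when
\begin{equation*}
\frac{w_i}{g_i\pi}-\frac{\Delta_i(p_{-i})}{h_{i,i}}\ \ge\ \overline{p}_i
\qquad\Longleftrightarrow\qquad
\pi\ \le\ \frac{w_ih_{i,i}}{g_i\bigl(\overline{p}_ih_{i,i}+\Delta_i(p_{-i})\bigr)} .
\end{equation*}
So the first step is to evaluate this condition at the unique NE profile $\bs p^*$ (existence and uniqueness being guaranteed by Proposition~\ref{prop:ne}) and to show that, whenever $\pi>\pi^l$, the right-hand side is strictly less than $\pi$ for \emph{every} $i$ --- which is precisely the assertion $0\le\mathcal B_i(p_{-i}^*)<\overline{p}_i$, $\forall i\in\mathcal N$.

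To control the interference terms I would use monotonicity of $\Delta_i(\cdot)$ in $p_{-i}$: every coordinate of $\bs p^*$ lies in its own strategy interval, so $p_j^*\le\overline{p}_j$ and hence $\Delta_i(p_{-i}^*)\le\Delta_i(\overline{p}_{-i})$. This immediately yields the \emph{aggregate} version of the statement: if all users were simultaneously saturated at the NE, the displayed condition would hold with $\Delta_i(p_{-i}^*)=\Delta_i(\overline{p}_{-i})$ for every $i$, forcing $\pi\le\min_i w_ih_{i,i}/[g_i(\overline{p}_ih_{i,i}+\Delta_i(\overline{p}_{-i}))]=\pi^l$, a contradiction. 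Equivalently, this is just the contrapositive of the ``only if'' half of Proposition~\ref{prop:ub}(4): since $u_B(\pi)=\pi\sum_i p_i^* g_i$, we have $u_B(\pi)=\pi\sum_i\overline{p}_i g_i$ iff $p_i^*=\overline{p}_i$ for all $i$, so $\pi>\pi^l$ already forces at least one user strictly below its cap at the NE.

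The step I expect to be the genuine obstacle is upgrading ``at least one user is below its cap'' to ``\emph{every} user is below its cap,'' since the inequalities above point the wrong way: a user that backs off only \emph{lowers} the interference seen by the others, which in isolation makes their best responses \emph{larger}. I would therefore argue not user-by-user but through the joint fixed-point structure $\bs p^*=\mathcal B(\bs p^*)$: run Algorithm~1 from $\bs p^{(0)}=\overline{\bs p}$, track the set $\mathcal A^{(t)}=\{\,i:p_i^{(t)}=\overline{p}_i\,\}$ of still-saturated users, and use the standard-function monotonicity of $\mathcal B$ (the framework of \cite{Yates} already invoked for Proposition~\ref{prop:ne}) to argue that $\mathcal A^{(t)}$ does not grow and, under $\pi>\pi^l$, eventually empties. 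Turning ``$\mathcal A^{(t)}$ is eventually a proper subset of $\mathcal N$'' into ``$\mathcal A^{(t)}$ is eventually empty'' is the crux, and it is also where any structural hypothesis --- on the interference matrix $[h_{j,i}]$, or on how far the per-user thresholds $w_ih_{i,i}/[g_i(\overline{p}_ih_{i,i}+\Delta_i(\overline{p}_{-i}))]$ can exceed $\pi^l$ --- would have to enter.
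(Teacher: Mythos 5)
Your reduction of the saturation condition to $\pi\le w_ih_{i,i}/[g_i(\overline{p}_ih_{i,i}+\Delta_i(p_{-i}^*))]$ is correct, and so is your diagnosis: the bound $\Delta_i(p_{-i}^*)\le\Delta_i(\overline{p}_{-i})$ only rules out \emph{all} users being simultaneously saturated, i.e.\ it gives the contrapositive of Proposition~\ref{prop:ub}(4) and nothing more. But the step you flag as the crux --- promoting ``some user is strictly below its cap'' to ``every user is'' --- is not merely hard; it fails without extra hypotheses. Take $N=2$ and let $T_i=w_ih_{i,i}/[g_i(\overline{p}_ih_{i,i}+\Delta_i(\overline{p}_{-i}))]$ with $T_1<T_2$, so $\pi^l=T_1$. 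For any $\pi\in(T_1,T_2)$ the profile $p_1^*=[w_1/(g_1\pi)-\Delta_1(\overline{p}_{-1})/h_{1,1}]_0^{\overline{p}_1}<\overline{p}_1$, $p_2^*=\overline{p}_2$ is a fixed point of $\mathcal B$: user $2$'s unclipped response to any $p_1\le\overline{p}_1$ is at least $w_2/(g_2\pi)-\Delta_2(\overline{p}_{-2})/h_{2,2}$, which exceeds $\overline{p}_2$ precisely because $\pi<T_2$. By Proposition~\ref{prop:ne} this is \emph{the} NE, and user $2$'s peak constraint is active. So no bookkeeping on the saturated set $\mathcal A^{(t)}$ will empty it for $\pi$ just above $\pi^l$; your suspicion that a structural hypothesis must enter at exactly that point is right.

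For comparison, the paper's own proof never confronts this: it differentiates $\mathcal B_i(p_{-i})$ with respect to $\pi$ \emph{holding $p_{-i}$ fixed} and combines ``strictly decreasing in $\pi$'' with ``equal to $\overline{p}_i$ at $\pi=\pi^l$.'' That only shows each user's reaction curve shifts down as $\pi$ grows, not that the equilibrium coordinate of every user leaves the boundary; it ignores precisely the feedback you identified, namely that other users backing off lowers $\Delta_i$ and pushes $\mathcal B_i$ back up. What survives, and what Algorithm~2 actually needs, is the weaker statement your argument does establish: for $\pi>\pi^l$ not all users are saturated, hence $u_B(\pi)<\pi\sum_i\overline{p}_ig_i$; full per-user interiority is guaranteed only once $\pi$ exceeds $\max_{i}w_ih_{i,i}/[g_i(\overline{p}_ih_{i,i}+\sigma^2)]$, where even the interference-free response falls below the cap. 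In short, your partial result is correct and your identification of the obstruction is the substantive content here; neither your sketch nor the paper's argument proves the corollary as stated, because as stated it does not hold.
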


\begin{proof}
Please see Appendix \ref{app:pi}.
\end{proof}




Based on the above analysis, we know that, 1) If $0\leq\pi\leq\pi^l$, each user transmits its maximum power and the aggregate received interference at the BS is upper bounded, and the corresponding payment to the BS is linear with the price. The intuitive explanation is that, if the BS's price is low enough, every user can afford the payment charged by the BS and will transmit power at a high level; 2) When $\pi\geq\pi^l$, every user reduces its transmit power due to the increased payment charged by the BS, and the transmit power of each user is decreasing with the price; 3) The BS's revenue finally vanishes if $\pi\geq\pi^u$.

Note that the monotonicity of $u_B(\pi)$ between the interval $[\pi^l,\pi^u]$ is very complex. However, the above analysis significantly reduces the complexity.
By exploiting these observations, we propose a feasible way to find the optimal price: We first divide the price interval $[\pi^l,\pi^u]$ into sufficiently small intervals, and for each small interval, the BS picks a price that falls into the small interval and estimates the aggregate received interference. The BS bargains with the users over all candidate prices and finally chooses the price that results in the maximum revenue while maintaining the interference temperature constraint.
Formally, we present the distributed uniform pricing algorithm in Algorithm 2.
\begin{algorithm}[!t]
\caption{Uniform Pricing for interference management}
\begin{algorithmic}[1]
\STATE The BS initializes the interference price as $\pi^u$. \STATE $\tau=0$ and $\epsilon$ is a small positive constant.
\REPEAT \STATE $\tau\leftarrow\tau+1$; \STATE Every user runs Algorithm 1; \STATE The BS measures the total received interference and computes revenue; \IF{$\sum_{i=1}^Np_ig_i\leq I_{th}$} \STATE $\pi^{(\tau+1)}\leftarrow\pi^{(\tau)}-\epsilon$; \ELSE \STATE $\pi^*\leftarrow\pi^{(\tau)}$. \STATE \textbf{break;} \ENDIF \UNTIL{$\pi^{(\tau+1)}\leq \pi^l$.} \STATE Output $\pi^*\leftarrow\arg\max_{\pi^{(\tau)}} u_B(\pi^{(\tau)})$.
\end{algorithmic}
\end{algorithm}

There are several points need to be noted. First, in Algorithm 2, the initial price is selected as $\pi^u$ rather than $\pi^l$.
%
If choose $\pi^l$ as the initial price, it only requires to slightly modify Algorithm 2 and the details are omitted here for brevity.

Second, in the process of price bargaining, the BS gradually decreases the price with the step size $\epsilon$ as the initial price is $\pi^u$.
The algorithm ends as long as the interference temperature constraint is active and there is no need for bargaining the rest of price candidates. This is due to the fact that the transmit power of each user is decreasing with the price in $[\pi^l,\pi^u]$, and so is the the total received interference at the BS, which have been proved in the proof of Corollary \ref{coro:pi}.

Third, it is easy to calculate the revenue. In the proposed uniform pricing framework, the revenue is the product of the uniform price and the total received interference at the BS. Therefore, the BS can obtain its revenue directly after measuring the total received interference. One also observes that, in Algorithm 2, the message passing between the BS and the users is only the broadcasted value of the price in each iteration.

\subsubsection{Differentiated Pricing with Global Information}

Here we consider the general case where different users are charged by different prices at the BS, i.e., differentiated pricing (also known as price discrimination in economics), by assuming that the BS knows the global network information.
%

Denote $\bs\pi=\{\pi_1,\cdots,\pi_N\}$ and $\bs p^*=\{p_1^*,\cdots,p_N^*\}$ the corresponding optimal power vector on the NE point of the user game. By reorganizing \eqref{eqn:best}, the optimal price $\bs\pi$ and the optimal power vector on the NE point $\bs p^*$ have the following relationship:
  \begin{equation}\label{eqn:pi-diff}
    \pi_i=\frac{w_ih_{i,i}}{g_i\left(\sum_{j=1}^Np_j^* h_{j,i}+\sigma^2\right)},~~ \forall i\in\mathcal N.
  \end{equation}

Using \eqref{eqn:pi-diff}, we can express the revenue function of the BS in terms of transmit powers at the NE point of the user-level subgame. As a consequence, the interference-constrained revenue maximization problem in \textbf{P1} can be rewritten as
\begin{subequations}\label{eqn:transfer}
\begin{align}
\max_{\bs p^*}~&u_B(\bs\pi(\bs p^*))=\sum_{i=1}^N\frac{p_i^*w_ih_{i,i}}{\sum_{j=1}^Np_j^* h_{j,i}+\sigma^2} \\
{\rm s.t.}~~&\sum_{i=1}^N p_i^*g_i\leq I_{th}\\
&0\leq p_i^*\leq \overline{p}_i,\forall i\in\mathcal N.
\end{align}
\end{subequations}


By letting 
\begin{equation*}
y_i=\frac{p_i}{\sum_{j=1}^Np_j^* h_{j,i}+\sigma^2}
\end{equation*}
and
\begin{equation*}
z_i=\frac{1}{\sum_{j=1}^Np_j^* h_{j,i}+\sigma^2},
\end{equation*}
%
the problem in \eqref{eqn:transfer} can be transformed as
\begin{subequations}\label{eqn:linear}
\begin{align}
\max_{\{y_i,z_i\}}~&\sum_{i=1}^N w_ih_{i,i}y_i \\
{\rm s.t.}~~&\sum_{i=1}^N g_iy_i - I_{th}z_i\leq0,\forall i\in\mathcal N\\
&y_i - \overline{p}_iz_i\leq0,\forall i\in\mathcal N\\
&\sum_{j=1}^N h_{j,i}y_i + z_i\sigma^2=1,\forall i\in\mathcal N\\
& y_i\geq0,z_i\geq0,\forall i\in\mathcal N.
\end{align}
\end{subequations}

By doing so, it can be easily shown that the problem in \eqref{eqn:linear} is a standard linear programming problem \cite{Boyd}, and thus a global optimum can be computed very efficiently.\footnote{Many numerical solvers for linear programming are available, e.g., MOSEK and CVX.}
After finding the optimal $y_i^*$ and $z_i^*$ by solving \eqref{eqn:linear}, $\bs p^*$ can be recovered using $p_i^*:=y_i^*/z_i^*$, $\forall i$, and then $\bs\pi^*$ can be determined using \eqref{eqn:pi-diff}.

Note that \eqref{eqn:pi-diff}-\eqref{eqn:linear} are used for determining the optimal prices $\boldsymbol\pi$. Though the optimal power $\bs p^*$ can also obtained in this process, it is exactly the same as the NE point of user-level game $\bs p^*(\boldsymbol\pi)$ for given optimal price $\boldsymbol\pi$:
the BS broadcasts the prices $\bs\pi$ to the users and, in turn, enforces the users to transmit the desired powers $\bs p^*$ at the NE point of the user-level subgame. This is true since we have proved in Proposition \ref{prop:ne} that the user-level subgame $\mathcal G$ always converges to the unique NE point for any given price vector.

Finally, we summarize the differentiated pricing algorithm in Algorithm 3. Note that the proposed Algorithm 3 does not need to iterate between the BS and users. Nevertheless, the BS needs to collect the global network information from the users for computing prices. This is possible since D2D users are controlled by the BS \cite{Doppler2009,Lei2012,Fodor2012}.
%
%
We also need to point out that,  the BS does not need to intelligently tell or control the users which power strategies to make, that is, the power allocations of the users still remain a distributed fashion with local CSI. Hence, Algorithm 3 is a \emph{partially} distributed algorithm.

\begin{algorithm}[!t]
\caption{Differentiated Pricing for interference management}
\begin{algorithmic}[1]
\STATE Given any price vector $\bs\pi$, every user runs Algorithm 1.
\STATE The BS collects the global network information for solving the problem in \eqref{eqn:linear}, and then finds the optimal price vector $\bs\pi$ using \eqref{eqn:pi-diff}.
\end{algorithmic}
\end{algorithm}

\subsubsection{Suboptimal Differentiated Pricing}
Algorithm 3 finds the optimal differentiated pricing policy, but it needs to be computed numerically when solving \eqref{eqn:linear}. Here we propose a suboptimal differentiated pricing scheme which has closed-form and thus significantly reduces the computational complexity. The suboptimal scheme is based on two assumptions: (i) See \eqref{eqn:ith}, stronger $g_i$ yields to higher utility from a seller's perspective. Thus the BS may just pre-set the interference tolerance margin for all users in proportion to $\{g_i\}$, then the interference temperature constraint in \eqref{eqn:ith} can be written as
\begin{equation}\label{eqn:subopt-th}
  p_ig_i\leq\frac{g_i}{\sum_{i\in\mathcal N}g_i}I_{th},\forall i\in\mathcal N.
\end{equation}
(ii) The transmitter-receiver distance of a D2D connection is usually very short, thus we assume that $h_{i,i}\gg h_{j,i}$  ($\forall j\neq i$) due to the effects of path-loss when the D2D users are uniformly distributed. In this case, the IpN term $\Delta_i(p_{-i})=\sum_{j\neq i}p_jh_{j,i}+\sigma^2\approx\sigma^2$, $\forall i$, and the optimal power of user $i$ can be approximated as
\begin{equation}\label{eqn:subopt-p}
  p_i^*\approx\left[\frac{w_i}{g_i\pi_i}-\frac{\sigma^2}{h_{i,i}}\right]_0^{\overline{p}_i}.
\end{equation}

Substituting \eqref{eqn:subopt-th} and \eqref{eqn:subopt-p} into \textbf{P1}, the problem can be decoupled to $N$ parallel subproblems and each having an identical structure:
\begin{subequations}\label{eqn:subopt}
\begin{align}
\max_{\pi_i}~& w_i-\frac{\sigma^2 g_i\pi_i}{h_{i,i}}\\
{\rm s.t.}~~& \left[\frac{w_i}{g_i\pi_i}-\frac{\sigma^2}{h_{i,i}}\right]\leq\frac{ I_{th}}{\sum_{i\in\mathcal N}g_i}\\
& \frac{w_ih_{i,i}}{g_i(\overline{p}_ih_{i,i}+\sigma^2)}\leq
\pi_i\leq\frac{w_ih_{i,i}}{g_i\sigma^2},\label{eqn:pii}
\end{align}
\end{subequations}
where \eqref{eqn:pii} ensures $0\leq p_i^*\leq\overline{p}_i$. The optimal price $\pi_i^*$ can be obtained as
\begin{eqnarray}
  \pi_i^*=\begin{cases}\frac{w_ih_{i,i}}{g_i(\overline{p}_ih_{i,i}+\sigma^2)},&{\rm if}~I_{th}\geq\overline{p}_i\sum_{i\in\mathcal N}g_i\\
  \frac{w_ih_{i,i}}{g_i\left(\frac{I_{th}h_{i,i}}{\sum_{i\in\mathcal N}g_i}+\sigma^2\right)},&{\rm otherwise}.
  \end{cases}
\end{eqnarray}

One can see that the suboptimal differentiated pricing scheme only needs the limited network information to compute optimal prices, and the BS can interact each user independently. Thus the suboptimal differentiated pricing scheme is a fully distributed scheme.

\subsection{Complexity Analysis}

In this subsection, we discuss the complexity of the proposed three algorithms at the user side and BS side, respectively.

\subsubsection{Complexity at User Side} For the optimal uniform and differentiated pricing, the needed network information at the user side are the same because the power allocation of the users form the noncooperative subgame $\mathcal G$ by the best response function  $\mathcal B_i(p_{-i})$ in \eqref{eqn:best}. 
Note that each user $i$ knows its weight $w_i$ and $\pi_i$ is broadcasted by the BS, $\mathcal B_i(p_{-i})$ can be computed by user $i$ based on local information only, including $h_{i,i}$, $g_i$, and $\Delta_i(p_{-i})$. Specifically, $h_{i,i}$ and $g_i$ can be obtained via training. That is, the source of user $i$ sends training signals and other nodes receive them, then its own destination and BS estimate $h_{i,i}$ and $g_i$, respectively, where $h_{i,i}$ can be sent via feedback channel, and the uplink channel information $g_i$ can be sent through feedback channel or downlink via time-division duplex (TDD) mode. In addition, the destination of user $i$ can measure the IpN term $\Delta_i(p_{-i})$ and send the value to its source.\footnote{The destination can measure the total received power via reference signals (defined in 3GPP LTE) and is also aware of the desired power from its own source. By extracting the desired received power from the total received power, $\Delta_i(p_{-i})$ is obtained.} 

In summary, at each D2D transmitter $i$, only $h_{i,i}$ and $\Delta_i(p_{-i})$ are needed, which can be measured at its receiver $i$ and fed back to transmitter $i$. This process can be completed by the help of BS. For example, if transmitter $i$ wants to know CSI $h_{i,i}$, it transmits training symbols in broadcast manner. Receiver $i$ first estimates and sends $h_{i,i}$ to BS and then BS sends the information to transmitter $i$. Such an information acquisition process for establishing D2D connections is defined in 3GPP specifications.

Moreover, for the optimal differentiated pricing, the cross channel gains $\{h_{j,i}\}_{j\neq i}$ additionally need to be estimated at the receivers. Since the training symbols are transmitted in broadcast manner, other adjacent D2D receivers $j\neq i$ also can receive them and thus estimate $h_{j,i}$ and then feed back to BS for computing differentiated prices. Note that $\{h_{j,i}\}_{j\neq i}$ are no need for the transmitters.

For the suboptimal differentiated pricing, the needed network information at the user side are the same as above, and the only difference is that the power allocation \eqref{eqn:subopt-p} does not need to iterate but the best response function  $\mathcal B_i(p_{-i})$ in \eqref{eqn:best} needs iteration.

\subsubsection{Complexity at BS Side}

For the uniform pricing at the BS side, the needed information are (please see \eqref{eqn:pi-u} and \eqref{eqn:pi-l}) channel gains  $h_{i,i}$ and $g_i$, IpN $\Delta_i(p_{-i})$, weight $w_i$ and peak power constraint $\overline{p}_i$. For the optimal differentiated pricing at the BS side, the needed information are (please see \eqref{eqn:transfer}) channel gains  $h_{i,i}$, $h_{j,i}$ ($j\neq i$) and $g_i$, weight $w_i$ and peak power constraint $\overline{p}_i$. Comparing the needed network information of the uniform and the optimal differentiated pricing, the only difference is that the uniform pricing needs IpN $\Delta_i(p_{-i})$ and the optimal differentiated pricing needs interference channel gains among users $h_{j,i}$ ($j\neq i$). The dimensions of $\{\Delta_i(p_{-i})\}$ and $\{h_{j,i}\}_{j\neq i}$ are $N\times1$ and $N\times (N-1)$, respectively, where $N$ is the number of users.

For the suboptimal differentiated pricing, the only difference compared to the optimal differentiated pricing is that the cross channel gains $\{h_{j,i}\}_{j\neq i}$ are no need by the assumption of the negligible interference among users.

\section{Simulation Results}

In this section, we conduct comprehensive simulations to evaluate the performance of the proposed algorithms for distributed power allocation and interference management in D2D cellular networks. Without loss of generality, we let the weights $w_i=1$ and the noise powers $\sigma^2=1$. We assume all D2D users have the same maximum power constraints in dB.
Since we focus on controlling the interference from the D2D users to the BS, we assume that in simulation there are only D2D transmissions for simplicity. Note that this does not affect the proposed algorithms since the impacts of uplink cellular transmissions can be integrated into the noise power in the SINR expressions of D2D users.

We consider a cell with a radius of $100$. For an illustration purpose, we consider  $N=4$ D2D users that are randomly but uniformly distributed within the coverage of the cell. The source-destination distance of each D2D user is randomly distributed between $(0,10]$.  The fading channels are modeled as $c\cdot L^{-\theta}$, where $c$ is the small-scale fading factor which is modeled by Rayleigh fading process, $L$ is the transmission distance and $\theta$ is the path loss exponent which is set to be $2$ for the large-scale fading.


%
\begin{figure*}[t]
\begin{centering}
\makeatletter\def\@captype{figure}\makeatother
\subfigure[$\bs p^{(0)}=\bs0$.]{\includegraphics[width=3.2in]{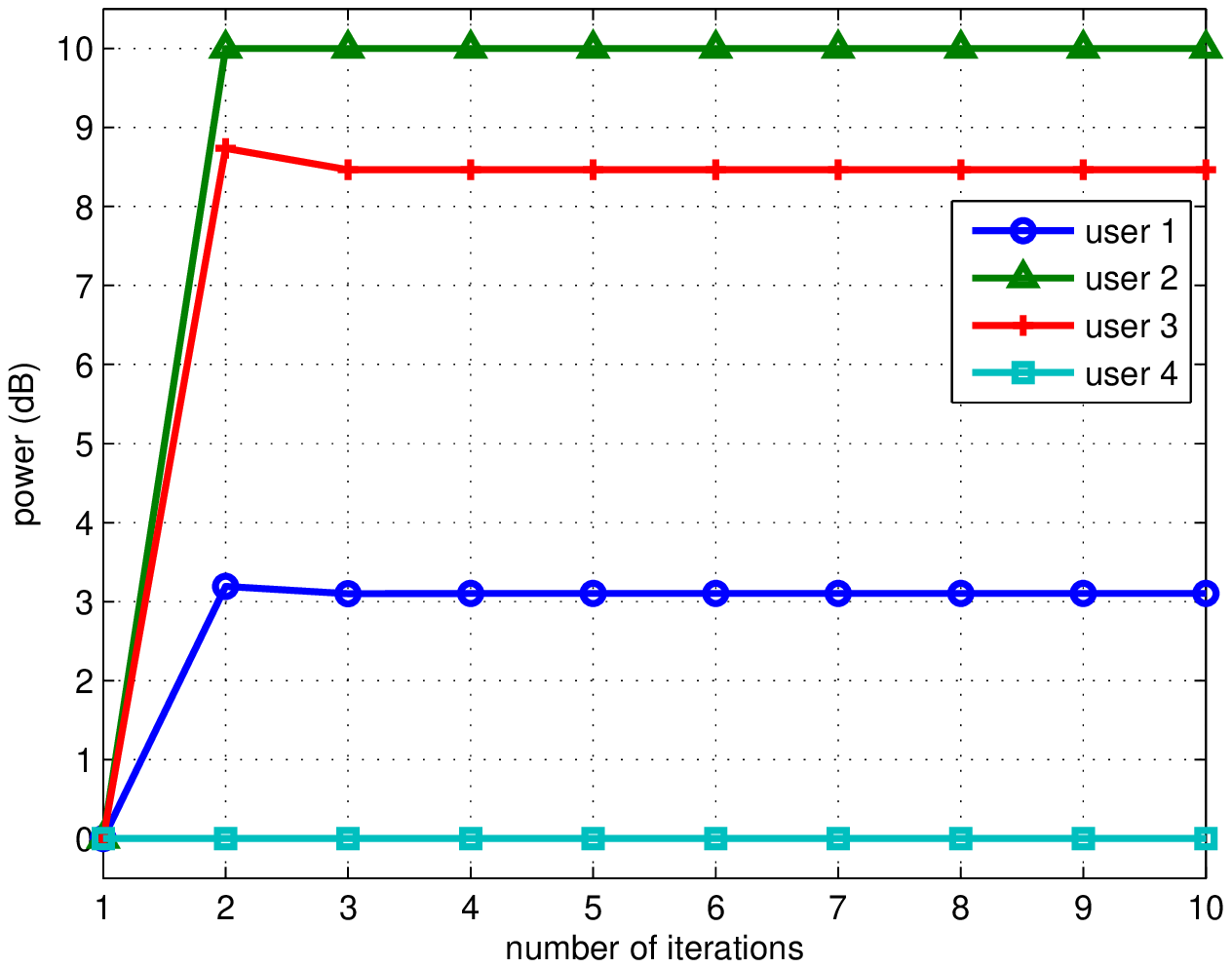}
\label{fig:example_revenue}}
\subfigure[$\bs p^{(0)}=\overline{\bs p}$.]{\includegraphics[width=3.2in]{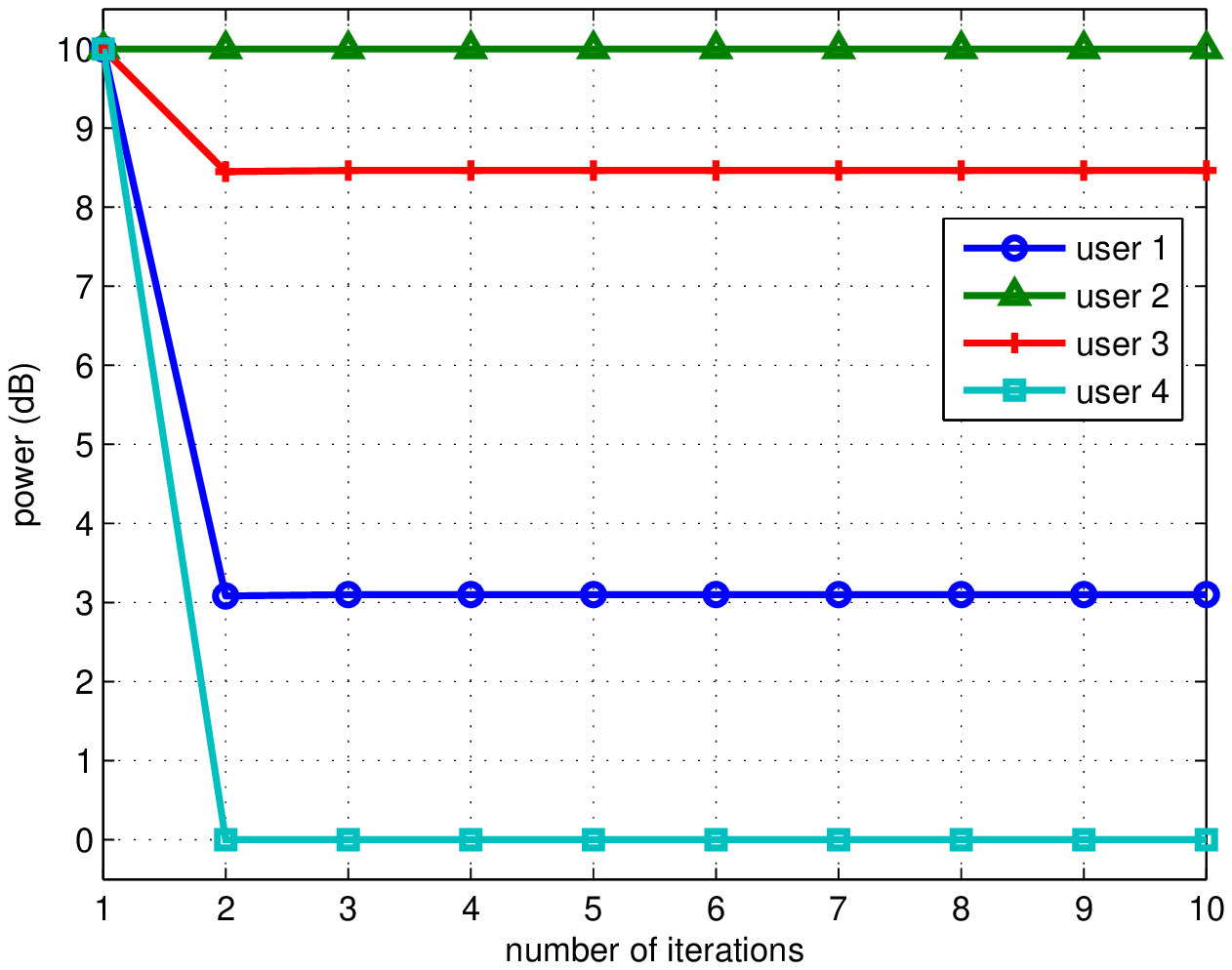}
\label{fig:example_power}}
\vspace{-0.1cm} \caption{Convergence performance of Algorithm 1 with 4 D2D users, where $\overline{p}_i=10$ dB, $\forall i$, and $\pi=\pi^u/10$.} \label{fig:4powers}
\end{centering}
\vspace{-0.3cm}
\end{figure*}

For a given random channel realization, we first investigate the convergence performance of the proposed iterative based distributed power allocation algorithm in Algorithm 1. As shown in Fig. \ref{fig:4powers}, where the peak power constraints are $\overline{p}_i=10$ dB, $\forall i$, and the price is prefixed as $\pi^u/10$, we can observe that the convergence speed of the proposed Algorithm 1 is very fast, in specific, only about $3$ iterations are needed for this example. Moreover, we observe that the initial power vectors $\bs p^{(0)}=\bs0$ and $\bs p^{(0)}=\overline{\bs p}$ do not affect the final power outputs, which verifies the Proposition \ref{prop:ne} that the NE point is regardless of the initial feasible power vector $\bs p^{(0)}$.

\begin{figure*}[t]
\begin{centering}
\makeatletter\def\@captype{figure}\makeatother
\subfigure[Revenue.]{\includegraphics[width=2.2in]{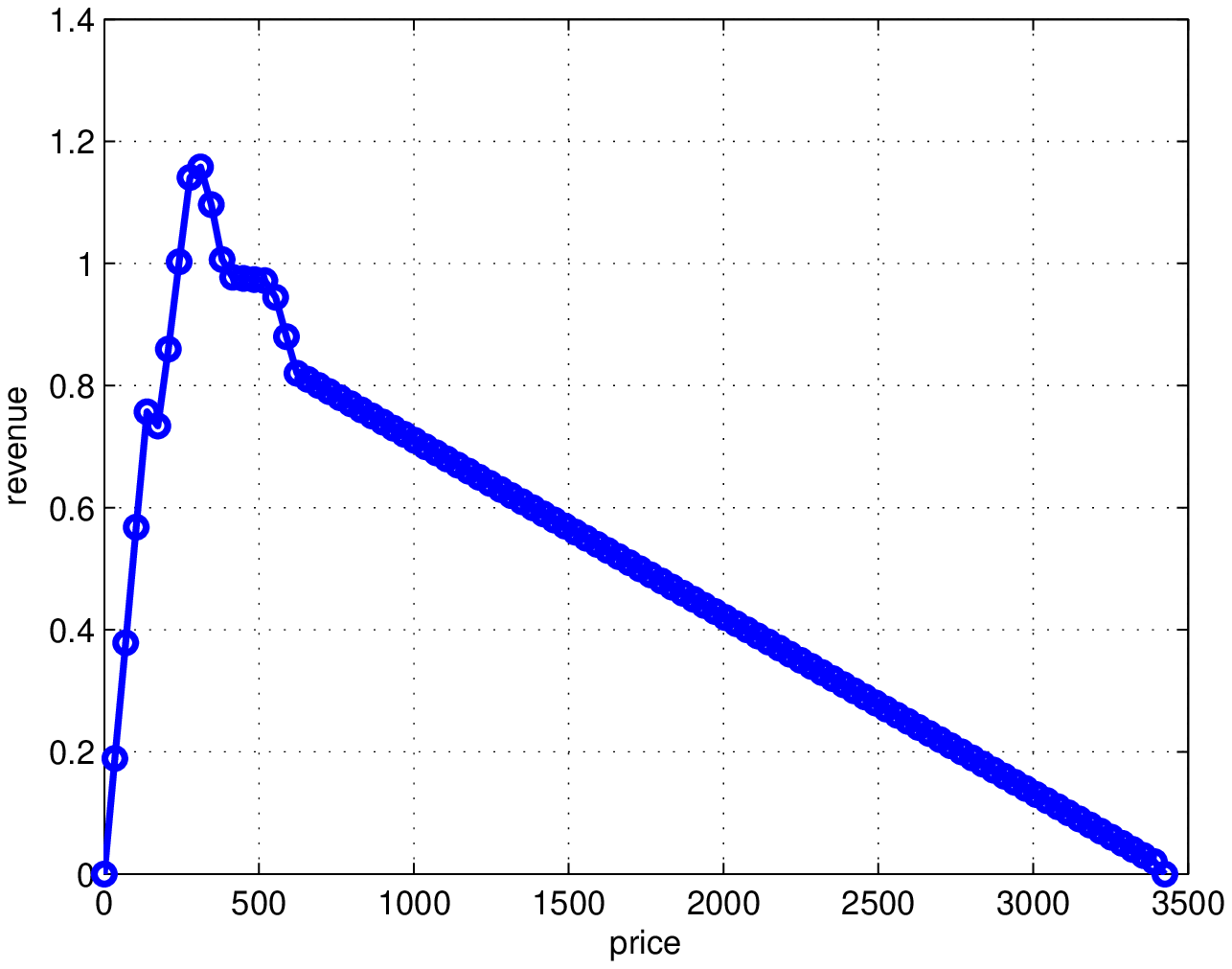}
\label{fig:example_revenue}}
\subfigure[Powers.]{\includegraphics[width=2.2in]{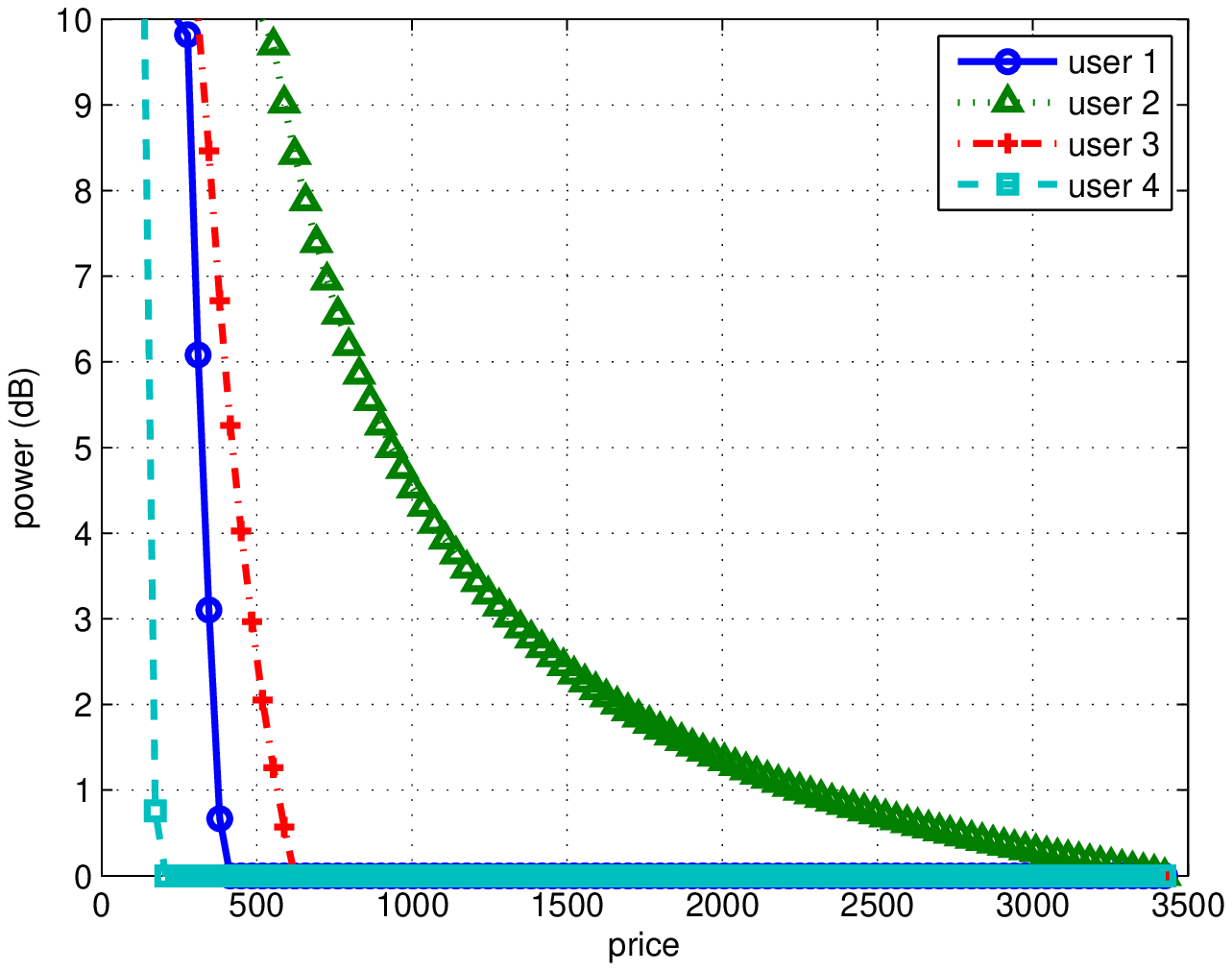}
\label{fig:example_power}}
\subfigure[Total interference at BS.]{\includegraphics[width=2.2in]{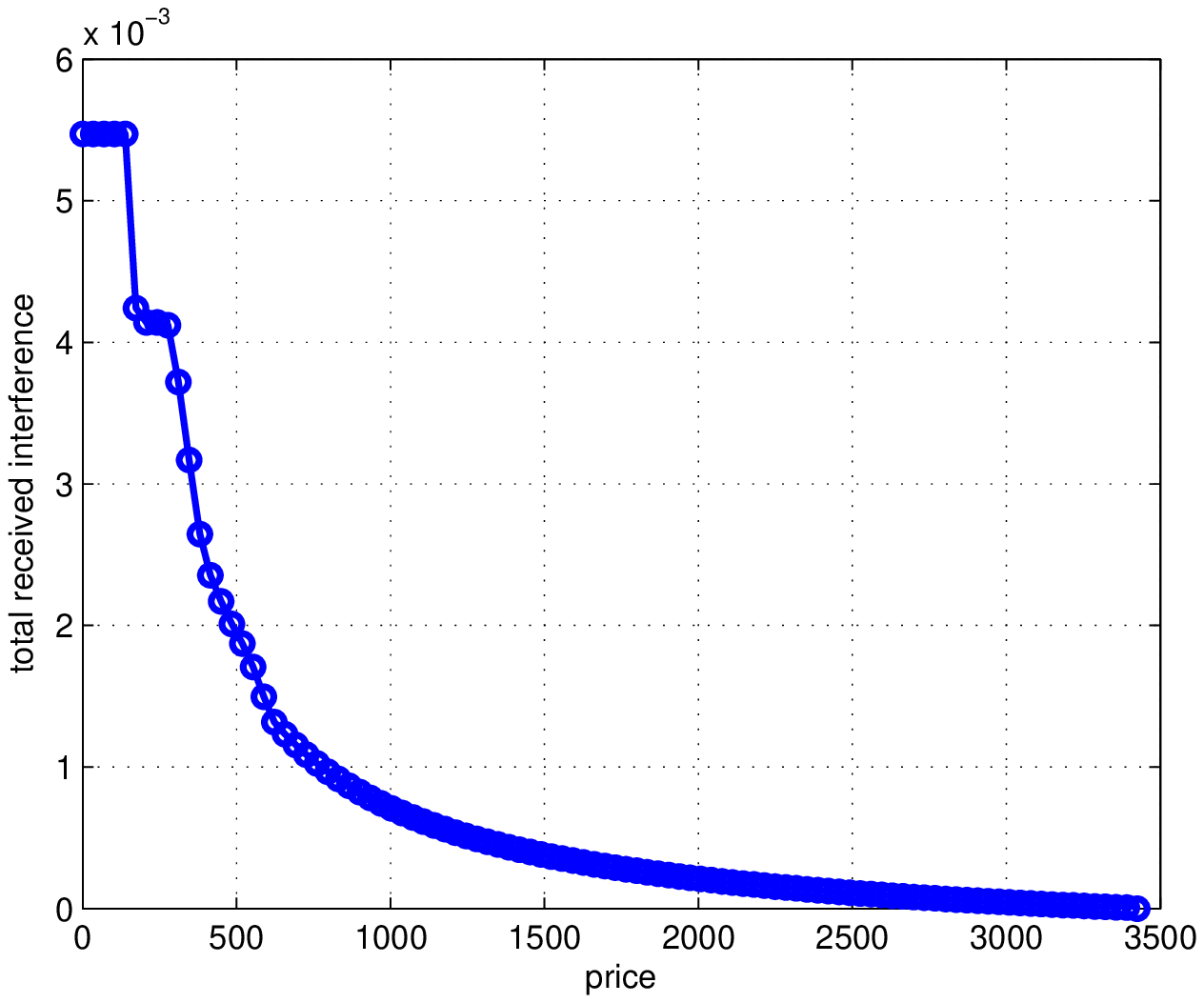}
\label{fig:example_interference}}
\vspace{-0.1cm} \caption{Performance of the uniform pricing scheme in Algorithm 2. } \label{fig:alg2}
\end{centering}
\vspace{-0.3cm}
\end{figure*}

Next, for the same channel realization, Fig. \ref{fig:alg2} studies the performance of the proposed uniform pricing scheme in Algorithm 2, where the peak power constraints are $\overline{p}_i=10$ dB, $\forall i$. In Fig. \ref{fig:example_revenue}, we observe that the revenue is linear with the price at the start and nonconvex after a certain point (i.e., lower bound price $\pi^l$) and finally becomes zero at a certain point (i.e., upper bound price $\pi^u$). One also observes from Fig. \ref{fig:example_power} that the powers keep maximum at the start (i.e., $[0,\pi^l]$) and are decreasing with the price, so is the aggregate interference at the BS in Fig. \ref{fig:example_interference}. These observations are in accordance with our analysis given in Section III-B.

\begin{figure}[t]
\begin{centering}
\includegraphics[scale=0.65]{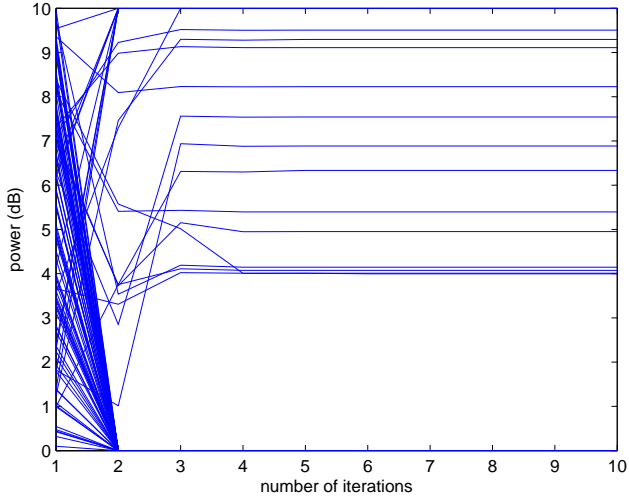}
\vspace{-0.1cm}
 \caption{Convergence performance of Algorithm 1 with 100 D2D pairs, where $\overline{p}_i=10$ dB, $\forall i$, and $\pi=\pi^u/10$.}\label{fig:100powers}
\end{centering}
\vspace{-0.3cm}
\end{figure}

Then we investigate the convergence performance of Algorithm 1 with 100 D2D users in Fig. \ref{fig:100powers} for a given channel realization, where we set $\overline{p}_i=10$ dB, $\forall i$, and $\pi=\pi^u/10$. We can observe that Algorithm 1 converges fast even with 100 users, i.e. 4 iterations in this example. This demonstrates the effectiveness of Algorithm 1.

\begin{figure}[t]
\begin{centering}
\includegraphics[scale=0.65]{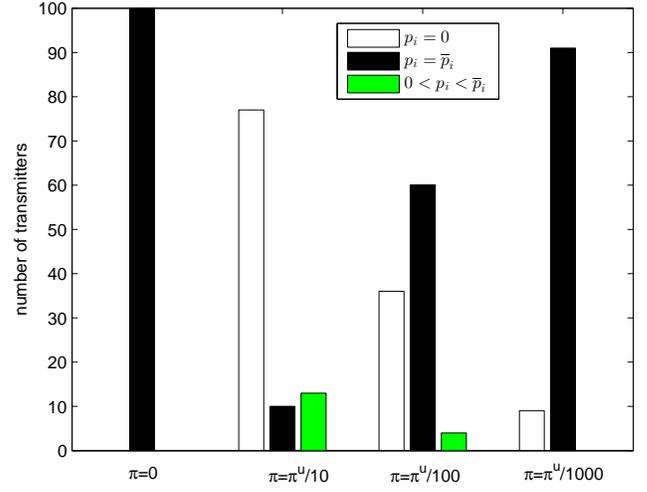}
\vspace{-0.1cm}
 \caption{The number of D2D transmitters of Algorithm 1 with different prices, where $N=100$ D2D pairs and $\overline{p}_i=10$ dB.}\label{fig:100num}
\end{centering}
\vspace{-0.3cm}
\end{figure}

Moreover, we investigate impact of the uniform price in Algorithm 1 on users' transmit power in Fig. \ref{fig:100num} for a given channel realization, where we set $\overline{p}_i=10$ dB, $\forall i$, and $N=100$ D2D pairs are considered. We can observe that all users will transmit their maximum power when $\pi=0$, which is accord with the conclusion in \cite{Rasti2009} that it is a regular noncooperative power control game in this case. We also observe that if the price is becoming larger, more and more users are not willing to transmit (i.e., $p_i=0$). This also coincides with our analysis in Section III.

In Figs. \ref{fig:snr} and \ref{fig:th}, we evaluate and compare the statistical (or average) performance of the three proposed pricing algorithms based on two distinct performance metrics, i.e., sum rates and revenue. A total of $1000$ channel realizations are used. For each channel realization, the location and the source-destination distance of each D2D user are random.  For the uniform pricing scheme in Algorithm 2, the step size of the price is selected as $\epsilon=(\pi^u-\pi^l)/1000$.
%

%
\begin{figure}[t]
\begin{centering}
\makeatletter\def\@captype{figure}\makeatother
\subfigure[Sum rates.]{\includegraphics[width=3.5in]{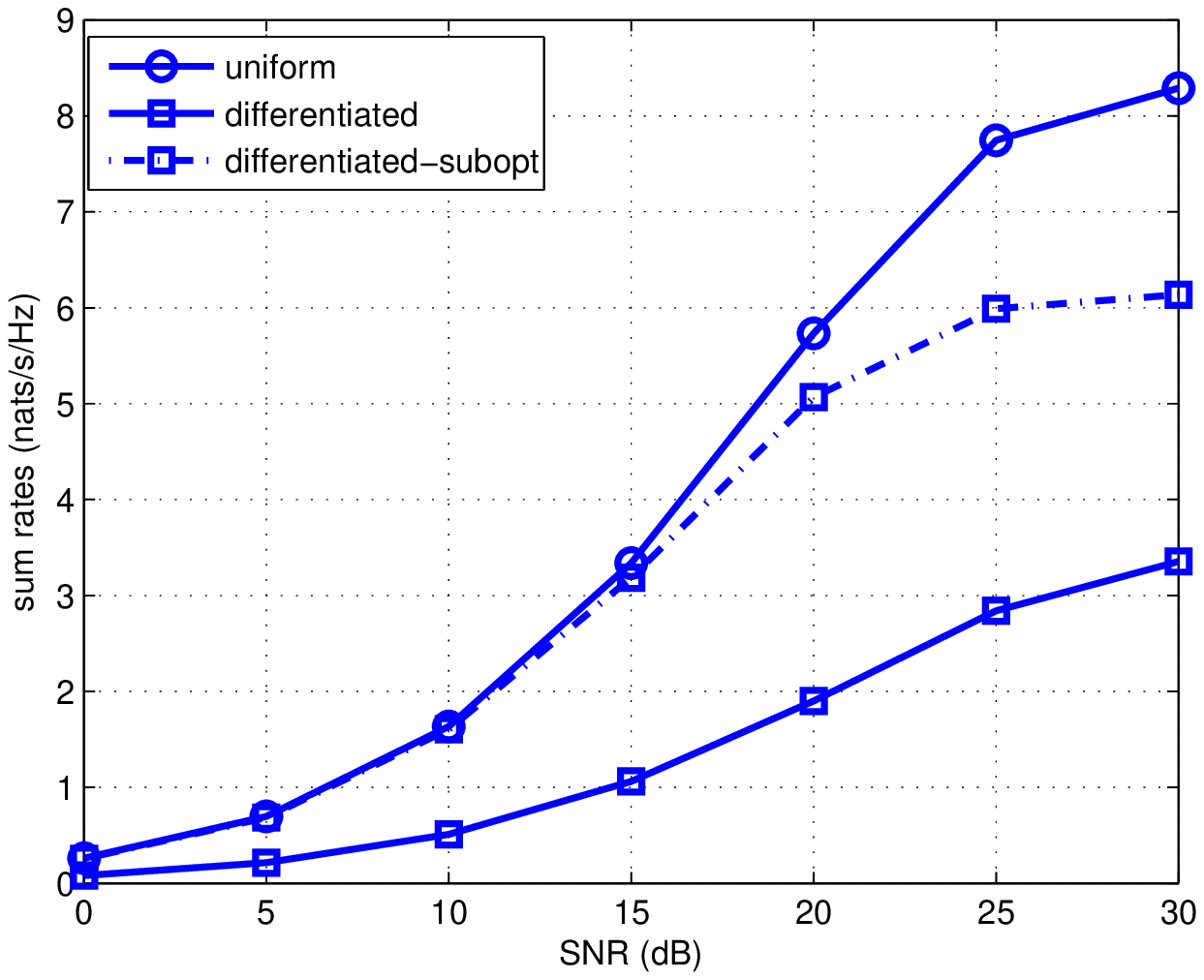}
\label{fig:4user_rate_snr}}
\subfigure[Revenue.]{\includegraphics[width=3.5in]{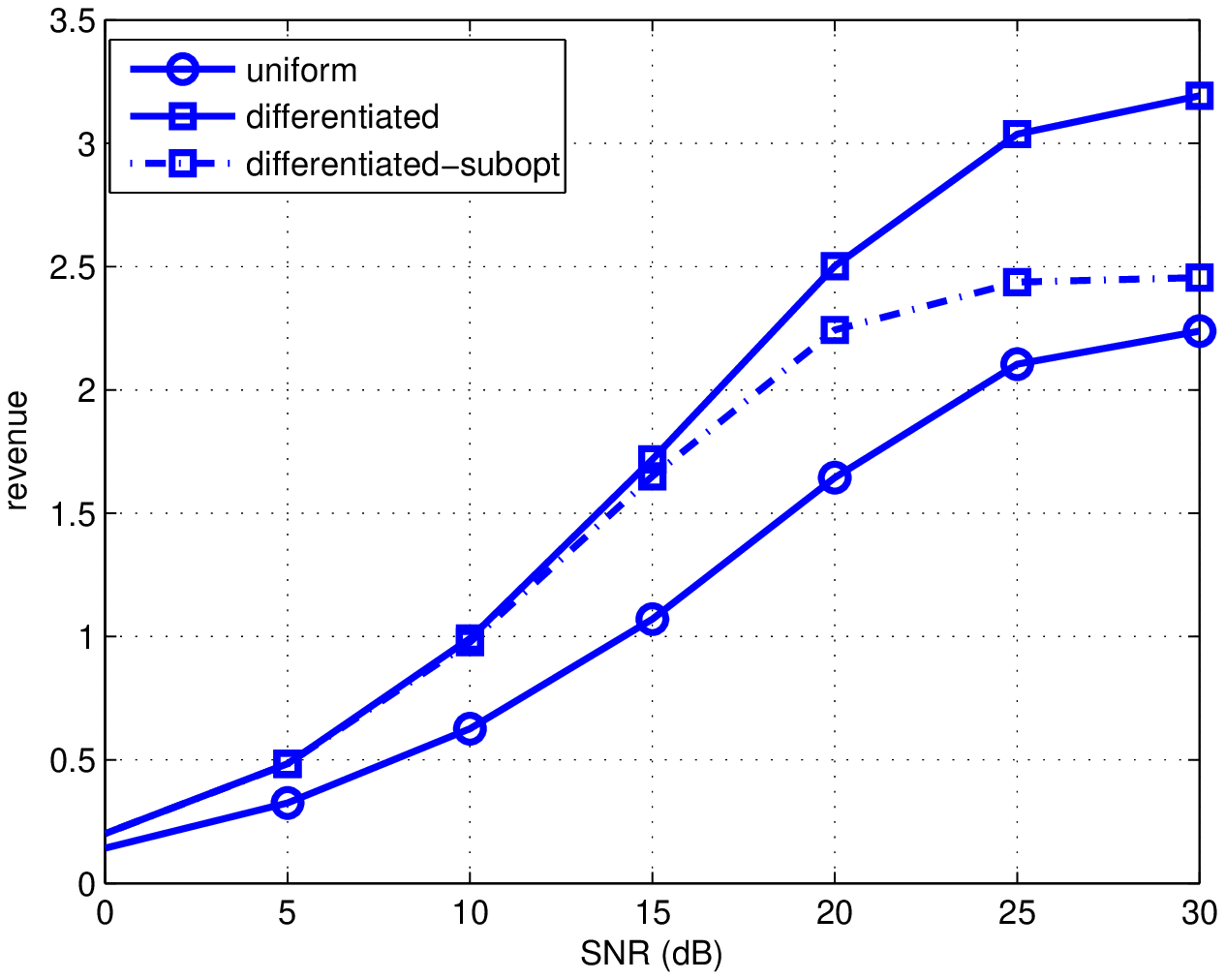}
\label{fig:4user_revenue_snr}}
\vspace{-0.1cm} \caption{Performance of the three proposed pricing algorithms, where the interference temperature constraint is set as $I_{th}=0.05$. } \label{fig:snr}
\end{centering}
\vspace{-0.3cm}
\end{figure}
\begin{figure}[t]
\begin{centering}
\makeatletter\def\@captype{figure}\makeatother
\subfigure[Sum rates.]{\includegraphics[width=3.5in]{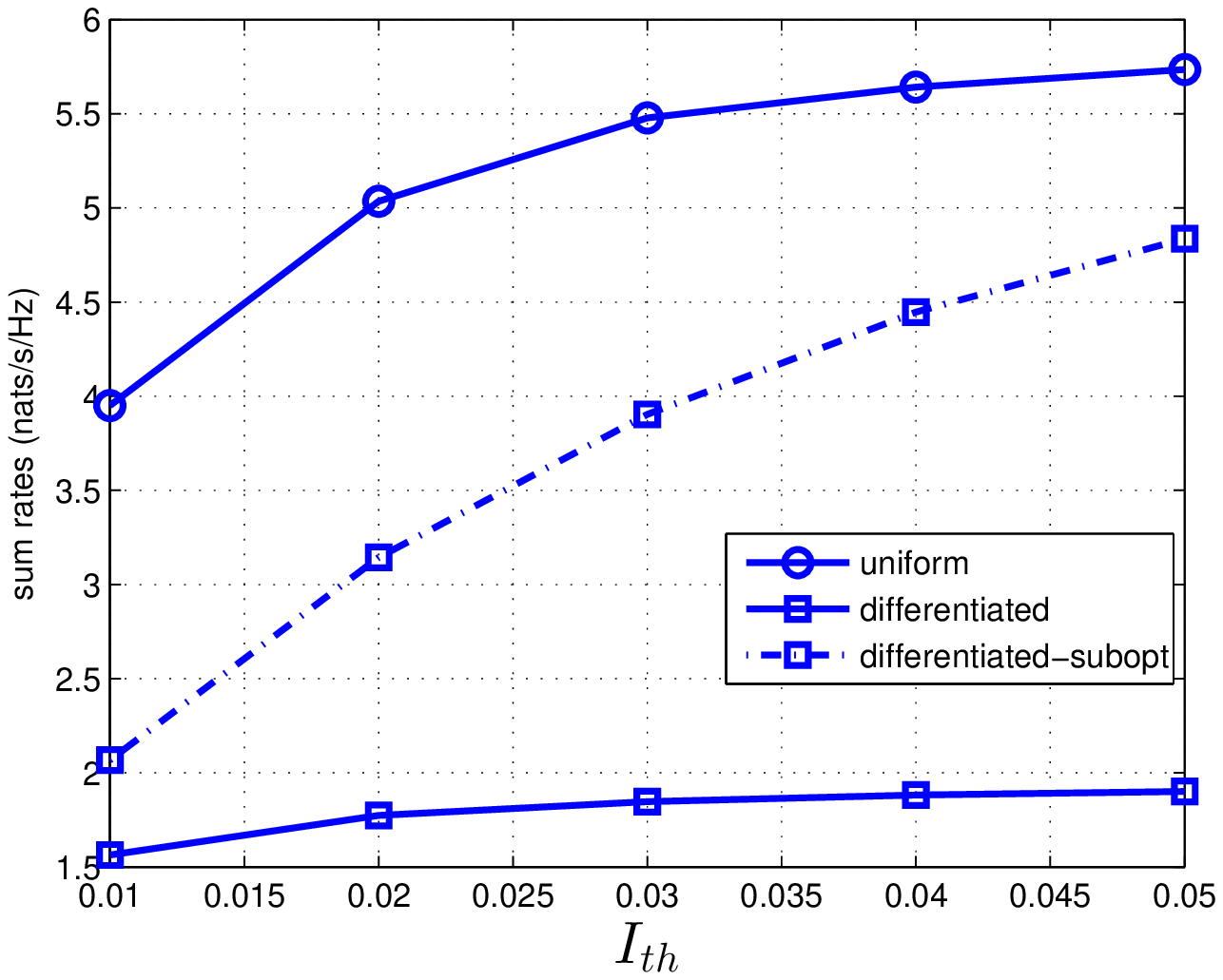}
\label{fig:4user_rate_th}}
\subfigure[Revenue.]{\includegraphics[width=3.5in]{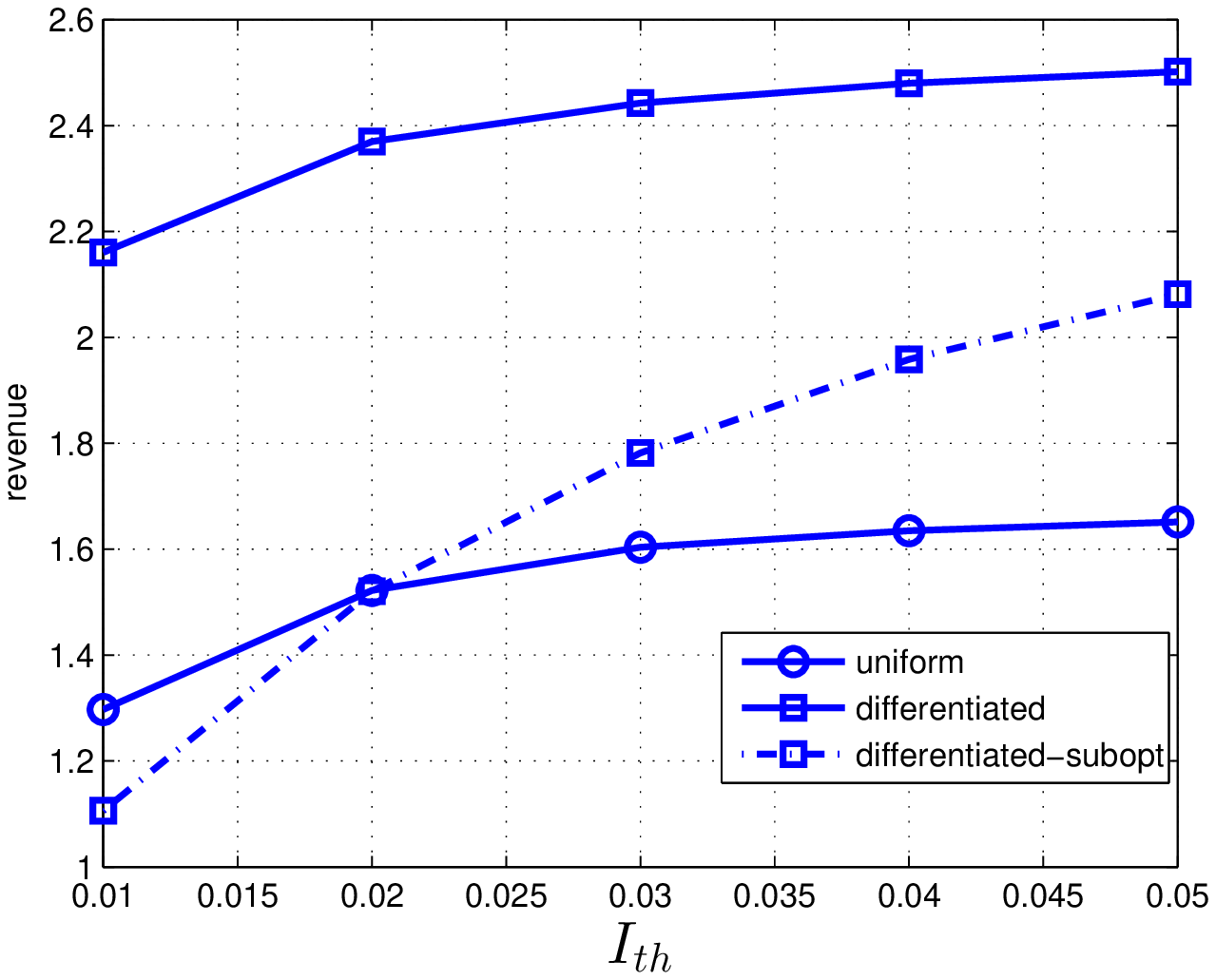}
\label{fig:4user_revenue_th}}
\vspace{-0.1cm} \caption{Performance of the three proposed pricing algorithms, where the peak power constraints are set as $\overline{p}_i=20$ dB, $\forall i$. } \label{fig:th}
\end{centering}
\vspace{-0.3cm}
\end{figure}

In Fig. \ref{fig:snr}, we fix the interference temperature constraint $I_{th}=0.05$.
It is observed that the revenue of the BS with the differentiated pricing schemes is generally larger than that with the uniform pricing scheme, while it is reverse for the sum-rate of the D2D users.
Note that the D2D users and the BS have the conflicting objectives. That is, higher utility of the D2D users means lower utility of the BS, and vice versa.  Hence, more efficient pricing scheme yields to higher utility of the BS but lower utility of the D2D users.
We observe from Fig. \ref{fig:4user_rate_snr} that, when SNR is below $15$ dB, the suboptimal differentiated pricing scheme and the uniform pricing scheme perform closely in terms of sum rates of the D2D users, but the uniform pricing scheme has better performance if SNR is higher than $15$ dB. Fig. \ref{fig:4user_revenue_snr} shows that the two differentiated pricing schemes outperform the uniform pricing scheme in terms of revenue over a wide range of SNR. Moreover, the two differentiated pricing schemes have the similar revenue performance when SNR is below $15$ dB, and the optimal differentiated pricing scheme outperforms the suboptimal one when SNR is higher than $15$ dB.

In Fig. \ref{fig:th}, we fix the peak power constraints $\overline{p}_i=20$ dB, $\forall i$. For the performance of sum rates, Fig. \ref{fig:4user_rate_th} shows that the uniform pricing scheme outperforms the two differentiated pricing schemes, and the suboptimal differentiated pricing scheme is better than the optimal differentiated pricing scheme in term of sum-rate of the users. The reason is mentioned in above. For the performance of revenue, Fig. \ref{fig:4user_revenue_th} illustrates that the optimal differentiated pricing scheme has the best performance as expected. The uniform pricing scheme is better than the suboptimal differentiated pricing scheme when $I_{th}$ is stringent, and the uniform pricing scheme is worse than the suboptimal differentiated pricing scheme when $I_{th}$ is loose.


\section{Conclusion}

In this paper, we studied distributed power allocation and interference management for the D2D enabled cellular networks. The interference temperature constraint was applied at the BS to ensure that the aggregate received interference from the D2D users is below a threshold. The interactions between the BS and D2D users were modeled as a Stackelberg game. Based on the noncooperative game theory, we proposed an iterative distributed power allocation algorithm for the D2D users, and proved that there always exists a unique NE point. We proposed a uniform pricing algorithm for the interference-constrained revenue maximization at the BS, under the assumption that the BS only knows the limited network information. Then a differentiated pricing algorithm was also presented by assuming that the BS has the global network information. We also proposed a suboptimal differentiated pricing scheme to reduce complexity. It was shown that the uniform pricing algorithm can be implemented with distributed manner and requires minimum information exchange between the BS and the D2D users. We also showed that the optimal and suboptimal differentiated pricing algorithms are iteration-free between the BS and the D2D users, and they are partially and fully distributed, respectively. The proposed framework and algorithms are useful and practical for resource allocation and interference management in spectrum-sharing D2D communications.

There are several research directions for future work. In this paper, the network nodes are assumed to be cooperative, which means that the BS and users follow the proposed algorithms, including the BS announcing the correct price signals and the users transmitting their pilots/beacons at the correct power levels. A challenging extension is to provide incentives for players to report the correct signals truthfully, or design punishment policies against cheating behaviors. Moreover, the nodes are assumed as myopic and the solution is NE, it is also interesting to consider the foresight players such that the nodes aim to maximize their long-term payoffs instead of immediate payoffs.

\appendices
\section{Proof of Proposition \ref{prop:ne}}\label{app:ne}
Here we only describe the key points and briefly to prove Proposition \ref{prop:ne} as follows.

To prove the existence of NE, we can show that: in subgame $\mathcal G$, for every user, the power strategy space $\mathcal P_i$ is a nonempty, convex, and compact subset of a Euclidean space, and the payoff function $u_i(p_i, p_{-i}, \pi_i)$ is continuous in $\bs p$ and quasi-concave\footnote{Quasi-concave is a generalization of concave.} in $p_i$, for all $i\in\mathcal N$ \cite{Fudenberg}.

Next, we prove the uniqueness of NE. By definition, the NE is the fixed point in the best response function set that satisfies $\bs p= \mathcal B(\bs p)$. For the two extreme cases of $\mathcal B(\bs p)=\bs0$ and $\mathcal B(\bs p)=\overline{\bs p}$, where $\overline{\bs p}=\{\overline{p}_1,\cdots,\overline{p}_N\}$, the fixed point of the best response function is unique and corresponds to the peak transmit power and zero transmit power for all users, respectively. For the other cases, we prove the uniqueness relying on the concept of \emph{standard function} \cite{Yates}. It has been proven in \cite{Yates} that the NE point (if it exists) in a standard function is unique. A function $\mathcal B(\bs p)$ is said to be standard if for all feasible power $\bs p$ , the following conditions hold \cite{Yates}
\begin{itemize}
  \item Positivity: $\mathcal B(\bs p)>0$;
  \item Monotonicity: if $\bs p\succeq \tilde{\bs p}$, then $\mathcal B(\bs p)\succeq\mathcal B(\tilde{\bs p})$;
  \item Scalability: for all $c>1$, $c\mathcal B(\bs p)>\mathcal B(c\bs p)$.
\end{itemize}
Here we omit the routine details of the proof.
%

It is worth noting that the existence of a fixed point (even it is unique) of an iterative process does not necessarily maintain the convergence, and the existence of a fixed point and convergence are two separate concepts of an iterative process. We prove that the proposed iterative distributed algorithm in Algorithm 1 can converge to the unique NE since the best response function is standard and each user has a peak power constraint \cite{Yates}.

\section{Proof of Proposition \ref{prop:ub}}\label{app:ub}
Properties 1) and 2) can be easily observed. Let $\mathcal B(\bs p)=\overline{\bs p}$ and $\mathcal B(\bs p)=\bs0$, we can obtain $\pi^l$ and $\pi^u$, respectively.
If
\begin{equation}
\pi \geq \pi^u\triangleq\max_{i\in\mathcal N}\frac{w_ih_{i,i}}{g_i\sigma^2},
\end{equation}
then 
\begin{equation}
\pi\geq\frac{w_ih_{i,i}}{g_i\sigma^2},~\forall i,
\end{equation}
which means that 
\begin{equation}
\frac{w_i}{g_i\pi}\leq\frac{\sigma^2}{h_{i,i}},~\forall i.
\end{equation}
Combining the fact that
\begin{equation}
\frac{\sigma^2}{h_{i,i}}\leq\frac{\sigma^2+\sum_{j\neq i}p_{j}h_{j,i}}{h_{i,i}}\triangleq\frac{\Delta_i(p_{-i})}{h_{i,i}},
\end{equation}
we conclude that 
\begin{equation}
\frac{w_i}{g_i\pi}\leq\frac{\Delta_i(p_{-i})}{h_{i,i}},~\forall i.
\end{equation}
Plugging it into the best response function $\mathcal B_i(p_{-i})$,
it is concluded that $\mathcal B_i(p_{-i})=0$ for all $i\in\mathcal N$ and thus $u_B(\pi)=0$. Moreover, $u_B(\pi) = 0$ if $\pi = 0$ obviously holds.

Similarly,  if 
\begin{equation}
\pi\leq\pi^l\triangleq\min_{i\in\mathcal N}\frac{w_ih_{i,i}}{g_i\left(\overline{p}_ih_{i,i}+\Delta_i(\overline{p}_{-i})\right)},
\end{equation}
then 
\begin{equation}
\pi\leq\frac{w_ih_{i,i}}{g_i\left(\overline{p}_ih_{i,i}+\Delta_i(\overline{p}_{-i})\right)},~\forall i,
\end{equation}
which means that 
\begin{equation}
\frac{w_i}{g_i\pi}\geq\frac{\overline{p}_ih_{i,i}+\Delta_i(\overline{p}_{-i})}{h_{i,i}}=
\overline{p}_i+\frac{\Delta_i(\overline{p}_{-i})}{h_{i,i}},
\end{equation}
or equivalently 
\begin{equation}
\frac{w_i}{g_i\pi}-\frac{\Delta_i(\overline{p}_{-i})}{h_{i,i}}\geq\overline{p}_i.
\end{equation}
Plugging it into the best response function $\mathcal B_i(p_{-i})$,
it is concluded that $\mathcal B_i(p_{-i})=\overline{p}_i$ for all $i\in\mathcal N$ and thus $u_B(\pi)=\pi\sum_{i=1}^N\overline{p}_i g_i$.

On the other hand, if $u_B(\pi)=0$, then $\pi=0$ or $p_i=0$ for all $i$. The former case is trivial and for the later case, it should be 
\begin{equation}
\frac{w_i}{g_i\pi}-\frac{\Delta_i(p_{-i})}{h_{i,i}}\leq0,~\forall i.
\end{equation}
This leads to 
\begin{equation}
\pi\geq\frac{w_ih_{i,i}}{g_i\sigma^2},~\forall i,
\end{equation}
which means 
\begin{equation}
\pi\geq\max_{i\in\mathcal N}\frac{w_ih_{i,i}}{g_i\sigma^2}\triangleq\pi^u.
\end{equation}

Similarly, if $u_B(\pi)=\pi\sum_{i=1}^N\overline{p}_i g_i$, it should be $p_i=\overline{p}_i$ for all $i$, or 
\begin{equation}
\frac{w_i}{g_i\pi}-\frac{\Delta_i(p_{-i})}{h_{i,i}}\geq\overline{p}_i,~\forall i.
\end{equation}
This means that 
\begin{equation}
\min_{i\in\mathcal N}\frac{w_i}{g_i\pi}-\frac{\Delta_i(p_{-i})}{h_{i,i}}\geq\overline{p}_i,
\end{equation}
and thus 
\begin{equation}
0\leq\pi\leq\min_{i\in\mathcal N}\frac{w_ih_{i,i}}{g_i\left(\overline{p}_ih_{i,i}+\Delta_i(\overline{p}_{-i})\right)}\triangleq\pi^l.
\end{equation}

The two extreme cases $\mathcal B(\bs p)=\bs0$ and $\mathcal B(\bs p)=\overline{\bs p}$ are also the NE points for given price. Then the properties 3) and 4) can be proved.

\section{Proof of Corollary \ref{coro:pi}}\label{app:pi}
 By differentiating the best response function $\mathcal B_i(p_{-i})$ with respect to $\pi$, it can be shown that $\mathcal B_i(p_{-i})$ is a strictly decreasing function of $\pi$ when $\pi^l\leq\pi\leq\pi^u$. As stated in the proof of Proposition \ref{prop:ub}, $\mathcal B_i(p_{-i})=\overline{p}_i$ when $\pi=\pi^l$. Therefore the peak power constraint of user $i$ is not active if $\pi^l<\pi\leq\pi^u$, i.e., $0\leq\mathcal B_i(p_{-i})<\overline{p}_i$, $\forall i\in\mathcal N$. When $\pi>\pi^u$, $\mathcal B_i(p_{-i})\equiv0$, $\forall i\in\mathcal N$.
This completes the proof.

\bibliographystyle{IEEEtran}
\bibliography{IEEEabrv,price}

\end{document}